\newcommand{\Bton}{B[2,n]}
\def\opn#1#2{\def#1{\operatorname{#2}}} 
\opn\chara{char} \opn\length{\ell} \opn\pd{pd} \opn\rk{rk}
\opn\projdim{proj\,dim} \opn\injdim{inj\,dim} \opn\rank{rank}
\opn\depth{depth} \opn\grade{grade} \opn\height{height}
\opn\embdim{emb\,dim} \opn\codim{codim}
\opn\Tr{Tr} \opn\bigrank{big\,rank}
\opn\superheight{superheight}\opn\lcm{lcm}
\opn\trdeg{tr\,deg}
\opn\reg{reg} \opn\lreg{lreg} \opn\ini{in} \opn\lpd{lpd}
\opn\size{size}\opn\bigsize{bigsize}
\opn\cosize{cosize}\opn\bigcosize{bigcosize}
\opn\sdepth{sdepth}\opn\sreg{sreg}
\opn\link{link}\opn\fdepth{fdepth}
\opn\div{div} \opn\Div{Div} \opn\cl{cl} \opn\Cl{Cl}
\opn\Spec{Spec} \opn\Supp{Supp} \opn\supp{supp} \opn\Sing{Sing}
\opn\Ass{Ass} \opn\Min{Min}\opn\Mon{Mon} \opn\dstab{dstab} \opn\astab{astab}
\opn\Syz{Syz}
\opn\Ann{Ann} \opn\Rad{Rad} \opn\Soc{Soc}
\opn\Im{Im} \opn\Ker{Ker} \opn\Coker{Coker} \opn\Am{Am}
\opn\Hom{Hom} \opn\Tor{Tor} \opn\Ext{Ext} \opn\End{End}
\opn\Aut{Aut} \opn\id{id}
\opn\resideal{I(Res)}
\opn\nat{nat}
\opn\pff{pf}
\opn\Pf{Pf} \opn\GL{GL} \opn\SL{SL} \opn\mod{mod} \opn\ord{ord}
\opn\Gin{Gin} \opn\Hilb{Hilb}\opn\sort{sort}
\opn\initial{init}
\opn\ende{end}
\opn\height{height}
\opn\type{type}
\opn\Res{Res}
\opn\proj{Proj}
\opn\syl{Syl}
\opn\resideal{I(Res)}
\opn\lm{LM}
\opn\pil{\rightarrow}
\opn\aff{aff} \opn\con{conv} \opn\relint{relint} \opn\st{st}
\opn\lk{lk} \opn\cn{cn} \opn\core{core} \opn\vol{vol}
\opn\link{link} \opn\star{star}\opn\lex{lex}
\opn\gr{gr}
\def\pot#1#2{#1[\kern-0.28ex[#2]\kern-0.28ex]}
\opn\dirlim{\underrightarrow{\lim}}
\opn\inivlim{\underleftarrow{\lim}}
\def\Implies{\ifmmode\Longrightarrow \else
        \unskip${}\Longrightarrow{}$\ignorespaces\fi}
\def\implies{\ifmmode\Rightarrow \else
        \unskip${}\Rightarrow{}$\ignorespaces\fi}
\def\iff{\ifmmode\Longleftrightarrow \else
        \unskip${}\Longleftrightarrow{}$\ignorespaces\fi}
 \newtheorem{Theorem}{Theorem}
 \newtheorem{Lemma}[Theorem]{Lemma}
 \newtheorem{Corollary}[Theorem]{Corollary}
 \newtheorem{Definition}[Theorem]{Definition}
 \newtheorem{Remark}[Theorem]{Remark}
\let\epsilon\varepsilon
\let\kappa=\varkappa
\def\qed{\ifhmode\textqed\fi
      \ifmmode\ifinner\quad\qedsymbol\else\dispqed\fi\fi}
\def\textqed{\unskip\nobreak\penalty50
       \hskip2em\hbox{}\nobreak\hfil\qedsymbol
       \parfillskip=0pt \finalhyphendemerits=0}
\def\dispqed{\rlap{\qquad\qedsymbol}}
\opn\dis{dis}
\def\pnt{{\raise0.5mm\hbox{\large\bf.}}}
\opn\Lex{Lex}
\newcommand{\Co}{\text{Co}}
\newcommand{\sus}{\subseteq}
\title{Eliminating Variables in Boolean Equation Systems}
\author{Bj{\o}rn M{\o}ller Greve\inst{1,2} \and H{\aa}vard Raddum\inst{2} \and Gunnar Fl{\o}ystad\inst{3} \and {\O}yvind Ytrehus\inst{2}}
\institute{Norwegian Defence Research Establishment \and Simula@UiB \and Dept. of Mathematics, UiB}
\date{\today}
\begin{document}

\maketitle

\begin{abstract}
Systems of Boolean equations of low degree arise in a natural way when analyzing block ciphers. The cipher's round functions relate the secret key to auxiliary variables that are introduced by each successive round. In algebraic cryptanalysis, the attacker attempts to solve the resulting equation system in order to extract the secret key. In this paper we study algorithms for eliminating the auxiliary variables from these systems of Boolean equations. It is known that elimination of variables in general increases the degree of the equations involved. In order to contain computational complexity and storage complexity, we present two new algorithms for performing elimination while bounding the degree at $3$, which is the lowest possible for elimination. Further we show that the new algorithms are related to the well known \emph{XL} algorithm. We apply the algorithms to a downscaled version of the LowMC cipher and to a toy cipher based on the Prince cipher, and report on experimental results pertaining to these examples. 
\end{abstract}

\section{Introduction}

A block cipher encryption algorithm $E_K(P)=C$ takes a fixed length plaintext $P$ and a secret key  $K$ as inputs, and produces a ciphertext $C$.  The encryption usually consists of iterating a round function, which in turn is made up by suitable linear and nonlinear transformations.  In a \emph{known plaintext attack}, both $P$ and $C$ are known to the cryptanalyst, who wants to find the secret key $K$. 

Ciphers defined over $GF(2)$ can be described as a system of multivariate Boolean equations of degree $2$. These equations relate the bits of the secret key $K$ and new \emph{auxiliary variables} that arise due to the round functions via the known $P$ and $C$. Solving this system of equations with respect to the secret key $K$ is known as \emph{algebraic cryptanalysis}. 
%
%
The approach in this paper is to 
iteratively eliminate the 
auxiliary variables that arise in an initial such system of  Boolean equations. 
At each iteration, the elimination step converts a current system of polynomial equations $F$ in the variables $x_1,\ldots,x_n$ into a set of new equations $F'$ in $x_2,\ldots,x_n$, so that the solution set of $F'$ is the projection of the solution set of $F$ onto $x_2,\ldots,x_n$. 
We propose two algorithms for the elimination step: $L-Elim{\bf *}()$ (a  variant of XL with bounded degree) and $eliminate{\bf *}()$ (a new algorithm based on the mathematical framework developed in this paper), where ${\bf *} \in \{{\bf A},{\bf B}\}$ denotes one of two variants.  In both algorithms the polynomial degree is limited to 3 at all times in order to contain computational complexity and storage complexity.  
The highlight of the paper is Theorem \ref{pro:FLG}, where we show that the more efficient algorithm $eliminate{\bf A}()$ produces the same output as $L-Elim{\bf A}()$. In addition we  show that by applying a few extra tricks, the ${\bf B}$-variants of the algorithms can produce more equations than the ${\bf A}$-variants.

The paper is structured as follows. Section~\ref{Notation} describes the problem, the notation, and previous results. The foundation for elimination techniques over the Boolean ring is developed in Section~\ref{sec:elimination}. The new algorithms, $L-Elim{\bf *}()$ and $eliminate{\bf *}()$, are presented in Section~\ref{Algorithms} along with discussions of complexity and information loss.  In Section~\ref{sec:experiments} we report on experimental results of the new algorithms when applied to reduced block ciphers: A reduced version of the LowMC cipher, and a \emph{toy cipher} based on the PRINCE cipher.



\section{Notation and Preliminaries}\label{Notation}

Consider the quotient ring of Boolean polynomials in $n$ variables. We denote the ring by 
$$B[1,n] = \mathbb{F}_2 [x_1 ,\ldots ,x_n ]/(x^2_i + x_i|i=1,\ldots ,n).$$
A monomial is a product  $x_{i_1} \cdots x_{i_{\delta}}$ of $\delta$ distinct (because $x^2 = x$) variables, where $\delta$ is the degree of this monomial. The degree of a polynomial \[p = \sum_s m_s\]where the $m_s$'s are distinct monomials, is the maximum degree over the monomials in $p$. 
Given a set of polynomial equations $F = \{f_i(x_1,\ldots,x_n) = 0 | i = 1,\dots,m \}$, our objective is to find its set of solutions in the space $\mathbb{F}^n_2$.  The approach we take in this paper is to solve the system of equations by eliminating variables. 

In the following we assume without loss of generality 
that we eliminate variables in the order $x_1, x_2, \ldots ,x_n$. 
Consider the projection which omits the first coordinate:

\[
\pi_1 : \mathbb{F}^n_2 \rightarrow \mathbb{F}^{n-1}_2
\]
\[
(a_1 , a_2 ,\ldots, a_n )\mapsto (a_2,\ldots ,a_n),
\]
and denote by $B[2, n]$ the ring of Boolean polynomials where $x_1$ has been omitted. We may in a similar fashion consider a sequence of $k$ projections

\[
\mathbb{F}^n_2 \rightarrow \mathbb{F}^{n-1}_2\rightarrow\cdots\rightarrow \mathbb{F}^{n-k}_2,
\]
where the $i$'th projection is denoted $\pi_i:\mathbb{F}^{n-i+1}_2 \rightarrow \mathbb{F}^{n-i}_2$ for $1\leq i\leq k$. We denote the ring of Boolean functions where we omit the sequence of variables $x_1,\ldots ,x_k$ as $B[k+1,n]$.

\subsection{Systems of Boolean equations and ideals}




The polynomials in $F=\{f_1,\ldots ,f_m\}$ generate an ideal $I= (f_1,\ldots , f_m) = I(F)$ in the ring $B[1,n]$.  Let $Z(I)$ denote the zero set of this ideal, i.e, the set of points

$$Z(I) = \{\textbf{a}\in \mathbb{F}^n_2 | f (\textbf{a}) = 0\text{ for every }f\in I\}.$$

\begin{Lemma}\label{lem:principal}
Let $f, g$ be Boolean functions in $B[1,n]$. Then the following ideals are equal:
$$(f, g) = (f g + f + g).$$
\end{Lemma}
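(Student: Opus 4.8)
The plan is to establish the two inclusions separately. The inclusion $(fg+f+g) \subseteq (f,g)$ is immediate, since $fg+f+g = g\cdot f + f + g$ is a $B[1,n]$-linear combination of $f$ and $g$, so the generator of the right-hand ideal already lies in $(f,g)$.

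For the reverse inclusion $(f,g)\subseteq(fg+f+g)$ I would exploit the defining feature of the Boolean ring. Because we have quotiented $\mathbb{F}_2[x_1,\ldots,x_n]$ by the relations $x_i^2+x_i$, the ring $B[1,n]$ is canonically identified with the ring of \emph{all} functions $\mathbb{F}_2^n \to \mathbb{F}_2$, under which every element is idempotent: each value lies in $\{0,1\}$ and squares to itself, so $f^2 = f$ and $g^2 = g$ hold for arbitrary $f,g \in B[1,n]$, not merely for the variables. Setting $h = fg+f+g$, it then suffices to recover both generators $f$ and $g$ as multiples of $h$.

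The computation is the short step $h\cdot f = f^2g + f^2 + fg = fg + f + fg = f$, where I use idempotency $f^2=f$ together with the characteristic-two cancellation $fg+fg=0$; by symmetry $h\cdot g = fg^2 + fg + g^2 = fg + fg + g = g$. Hence $f,g \in (h)$, giving $(f,g) \subseteq (h)$, and combined with the first inclusion this yields the desired equality $(f,g) = (h)$. There is no genuine obstacle here; the only point that must be made explicit is the idempotency of \emph{arbitrary} ring elements, which is precisely what distinguishes the Boolean quotient from a general polynomial ring and what allows each generator to be expressed individually through the single product $fg+f+g$.
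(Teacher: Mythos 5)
Your proof is correct, but it takes a different route from the paper's. The paper argues semantically: it observes that $Z(f,g)=Z(fg+f+g)$ by checking values pointwise, and then invokes the principle that for Boolean functions containment of zero sets forces divisibility, concluding that $f$ and $g$ are each multiples of $fg+f+g$ without exhibiting the multipliers. You instead argue syntactically, using the idempotency $p^2=p$ of \emph{every} element of $B[1,n]$ (which indeed holds: in characteristic $2$ squaring is a ring homomorphism, so for a multilinear $p=\sum_S c_S m_S$ one gets $p^2=\sum_S c_S^2 m_S^2=p$) to compute the explicit identities $(fg+f+g)\cdot f=f$ and $(fg+f+g)\cdot g=g$. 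Your version is more self-contained — it does not need the zero-set machinery or the divisibility-from-vanishing principle, which the paper only establishes later in Corollary~\ref{cor2} and which in its proof is itself asserted rather than derived — and it produces the multipliers explicitly, which is a small but genuine gain in constructiveness. The paper's approach, on the other hand, fits more naturally into the overall narrative of the section, where everything is phrased in terms of zero sets and projections. The one point you rightly flag as needing justification, idempotency of arbitrary elements, is easily supplied as above, so there is no gap.
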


\begin{proof}
Clearly $(f,g)\supseteq (fg + f + g)$.  Note that also $Z(f, g) = Z(fg + f + g)$, since it is easy to check that $f(a) = g(a) = 0$ if and only if $f (a)g(a) + f(a) + g(a) = 0$. Thus the zero set $Z(f)\supseteq Z(fg + f + g)$.  This in turn means that the Boolean function $f$ is a multiple $h(fg + f + g)$ for some other Boolean function $h$, and similarly for $g$. Thus $$(f,g)\supseteq (fg + f + g) \supseteq (f,g),$$ which shows that these ideals are equal.
\end{proof}

\begin{Corollary}\label{cor:principal}
Any ideal $I = (f_1,\ldots,f_m)$ in $B[1,n]$ is a principal ideal. More precisely $I =(f)$ where
$$f=1+\prod_{i=1}^m(f_i+1).$$
\end{Corollary}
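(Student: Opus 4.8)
The plan is to prove the statement by induction on the number of generators $m$, using Lemma~\ref{lem:principal} as the engine of the inductive step. The first move is to rewrite the product appearing in that lemma in a form that will telescope. Over $\mathbb{F}_2$ one has $fg + f + g = 1 + (f+1)(g+1)$, since expanding $(f+1)(g+1) = fg + f + g + 1$ and adding $1$ cancels the constant term (recall $1 + 1 = 0$). Hence Lemma~\ref{lem:principal} can be restated as $(f, g) = \bigl(1 + (f+1)(g+1)\bigr)$, which already matches the shape of the formula in the corollary for $m = 2$.

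For the induction, the base case $m = 1$ is immediate: the claimed generator is $1 + (f_1 + 1) = f_1$, so $(f_1) = (f_1)$. For the inductive step, I would assume the result for $m - 1$ generators, so that $(f_1, \ldots, f_{m-1}) = (g)$ with $g = 1 + \prod_{i=1}^{m-1}(f_i + 1)$. Then $(f_1, \ldots, f_m) = (g, f_m)$, because adjoining $f_m$ to the principal ideal $(g)$ is the same as adjoining it to the generating set $\{f_1, \ldots, f_{m-1}\}$. Applying the restated Lemma to the pair $g$ and $f_m$ gives $(g, f_m) = \bigl(1 + (g+1)(f_m + 1)\bigr)$.

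It then remains to identify $1 + (g+1)(f_m+1)$ with the desired generator $f$. This is where the telescoping happens: $g + 1 = \prod_{i=1}^{m-1}(f_i + 1)$, again using $1 + 1 = 0$, so that $(g+1)(f_m+1) = \prod_{i=1}^{m}(f_i + 1)$ and therefore $1 + (g+1)(f_m+1) = 1 + \prod_{i=1}^m(f_i+1) = f$. This closes the induction and yields $I = (f)$, which is both the principality claim and the explicit formula.

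There is no substantial obstacle in this argument; the only point requiring care is the bookkeeping with the constant term $1$, namely keeping track of exactly when a $+1$ is absorbed by the characteristic-$2$ relation $1 + 1 = 0$. Getting the reformulation $fg + f + g = 1 + (f+1)(g+1)$ right at the outset is what makes the product collapse cleanly at each stage, so this identity is really the crux of the whole proof.
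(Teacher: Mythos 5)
Your proof is correct and follows essentially the same route as the paper: repeatedly collapsing two generators into one via Lemma~\ref{lem:principal} and observing that the products $(f_i+1)$ telescope. You simply make explicit (as an induction with the identity $fg+f+g = 1+(f+1)(g+1)$) what the paper's proof leaves as ``we may continue the process.''
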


\begin{proof}
Let $I = (f_1 ,\ldots, f_m)$. By Lemma \ref{lem:principal} this is equal to the ideal $(f_1f_2 + f_1 + f_2 , f_3 ,\ldots,f_m)$, with one generator less. We may continue the process for the remaining generators providing us in the end with $I = (f)$, where

$$f=1+\prod_{i=1}^m(f_i + 1).$$

\end{proof}

\begin{Corollary}\label{cor2}
For two ideals in $B[1,n]$ we have $I\supseteq J$ if and only if $Z(I)\subseteq Z(J)$.
In particular $I = J$ if and only if $Z(I) = Z(J)$.
\end{Corollary}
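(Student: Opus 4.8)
The plan is to reduce everything to the principal case supplied by Corollary~\ref{cor:principal} and then to exploit the fact that $B[1,n]$ is precisely the ring of all functions $\mathbb{F}_2^n \to \mathbb{F}_2$. The forward implication is immediate and holds for any ideals: if $I\supseteq J$, then a point at which every element of $I$ vanishes is in particular a point at which every element of $J$ vanishes, so directly from the definition of the zero set $Z(I)\subseteq Z(J)$. In other words $Z$ is inclusion-reversing, and no structure of $B[1,n]$ beyond the definition is needed here.

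For the converse, which is the substantive direction, I would first apply Corollary~\ref{cor:principal} to write $I=(f)$ and $J=(g)$ as principal ideals. For a principal ideal $(f)$ one checks immediately that $Z((f))=Z(f)$, since every element of $(f)$ is a multiple of $f$. Thus it suffices to prove the divisibility claim: if $Z(f)\subseteq Z(g)$ then $g\in (f)$, i.e. $g=hf$ for some $h\in B[1,n]$. This is exactly the type of statement already invoked inside the proof of Lemma~\ref{lem:principal}, so the same mechanism applies.

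To produce $h$ I would define it pointwise: let $h(a)=g(a)$ at each point $a$ with $f(a)=1$, and $h(a)=0$ at each point with $f(a)=0$. Because every function $\mathbb{F}_2^n\to\mathbb{F}_2$ is represented by a unique element of $B[1,n]$ (the evaluation map from squarefree Boolean polynomials to functions being a bijection, as both sets have cardinality $2^{2^n}$), this $h$ is a genuine element of the ring. Verifying $hf=g$ is then a two-case check: where $f(a)=1$ both sides equal $g(a)$, and where $f(a)=0$ the left side is $0$ while $g(a)=0$ as well, since $a\in Z(f)\subseteq Z(g)$. Hence $g\in(f)=I$, so $J=(g)\subseteq I$, giving $I\supseteq J$.

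The final ``in particular'' statement then follows by applying the equivalence in both directions: $I=J$ iff $I\supseteq J$ and $J\supseteq I$, iff $Z(I)\subseteq Z(J)$ and $Z(J)\subseteq Z(I)$, iff $Z(I)=Z(J)$. I expect the main obstacle to be precisely the passage from the inclusion of zero sets to the algebraic identity $g=hf$; everything hinges on the identification of $B[1,n]$ with the full ring of $\mathbb{F}_2$-valued functions on $\mathbb{F}_2^n$, which is what legitimizes the pointwise definition of $h$. Over a larger field this step would require the Nullstellensatz and would not yield such a clean multiple, but the relations $x_i^2=x_i$ over $\mathbb{F}_2$ make the argument elementary.
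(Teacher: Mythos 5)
Your proof is correct and follows essentially the same route as the paper: reduce to principal ideals via Corollary~\ref{cor:principal} and deduce $J\subseteq I$ from $Z(f)\subseteq Z(g)$ via the divisibility $g=hf$. The only difference is that you explicitly construct $h$ pointwise using the identification of $B[1,n]$ with the ring of all functions $\mathbb{F}_2^n\to\mathbb{F}_2$, a step the paper merely asserts (as it also does inside Lemma~\ref{lem:principal}).
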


\begin{proof}
By Corollary \ref{cor:principal} we have $I = (f)$ and $J = (g)$, where $f$ and $g$ are the respective principal generators. Clearly if $(f)\supseteq (g)$ then $Z(g)$ contains $Z(f)$. If the zero set of $g$ contains the zero set of $f$, then $g$ = $fh$ for some polynomial $h$. Hence $(f)\supseteq (g)$.
\end{proof}

\medskip
Now given an ideal $I\subset B[1,n]$, our aim is to find the ideal $I_2\subset B[2, n]$ such that $Z(I_2)=\pi_1(Z(I))$. More generally, when eliminating more variables we aim to find the ideal $I_{k+1}\subset B[k+1,n]$, such that $Z(I_{k+1})=\pi_{k}\circ(\cdots\circ(\pi_1(Z(I))))$. Since the complexity of the polynomials can grow very quickly when eliminating variables, we would rather want to compute an ideal $J$, as large as possible given computational restrictions, which is contained in $I_{k+1}$.

If we can find solutions to $Z(J)$ contained in $\mathbb{F}^{n-k}_2$ we can then check if they lift to solutions in $Z(I)$ contained in $\mathbb{F}^{n}_2$, by sequentially lifting the solutions backwards with respect to each projection. Let us first describe precisely the ideal $I_{k+1}$ whose zero set is the sequence of projections $\pi_{k}\circ(\cdots\circ(\pi_1(Z(I))))$. This corresponds to what is known as the \emph{elimination ideal} $I\cap B[k+1,n]$.

\begin{Lemma}\label{Lemma1}
Let $I_{k+1}\subseteq B[k+1,n]$ be the ideal of all Boolean functions vanishing on $\pi_{k}\circ(\cdots\circ(\pi_1(Z(I))))$. Then $I_{k+1} = I\cap B[k+1,n]$.
\end{Lemma}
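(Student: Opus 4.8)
The plan is to prove the two inclusions separately, reducing everything to the set--ideal dictionary established in Corollary \ref{cor2}. Write $\pi = \pi_k \circ \cdots \circ \pi_1 : \mathbb{F}^n_2 \to \mathbb{F}^{n-k}_2$ for the composite projection and $W = \pi(Z(I))$ for the projected zero set, so that by definition $I_{k+1}$ is exactly the ideal of functions in $B[k+1,n]$ vanishing on $W$. The one fact I would isolate at the outset is that every $f \in B[k+1,n]$, regarded as a function on $\mathbb{F}^n_2$, is constant along the fibres of $\pi$: since $f$ does not involve $x_1,\ldots,x_k$, we have $f(a) = f(\pi(a))$ for every $a \in \mathbb{F}^n_2$. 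This compatibility is what lets me move vanishing conditions back and forth across the projection.

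For the inclusion $I \cap B[k+1,n] \subseteq I_{k+1}$, take $f \in I \cap B[k+1,n]$. Being in $I$, it vanishes on $Z(I)$. Given any $w \in W$, choose $a \in Z(I)$ with $\pi(a) = w$; then $f(w) = f(\pi(a)) = f(a) = 0$ by the fibre-constancy above. Hence $f$ vanishes on all of $W$, i.e. $f \in I_{k+1}$. This direction is essentially bookkeeping and I do not expect it to cause difficulty.

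The reverse inclusion $I_{k+1} \subseteq I \cap B[k+1,n]$ is where the real content lies. Let $f \in I_{k+1}$, so that $f \in B[k+1,n]$ and $f$ vanishes on $W$. The issue is that vanishing on a projected set is an a priori weaker, purely set-theoretic condition, and I must upgrade it to actual membership in $I$. I would first check the set containment $Z(I) \subseteq Z(f)$ inside $\mathbb{F}^n_2$: for $a \in Z(I)$ we have $\pi(a) \in W$, so $f(a) = f(\pi(a)) = 0$, again by fibre-constancy. Now Corollary \ref{cor2}, applied to the ideals $I$ and $(f)$, converts this containment of zero sets into the ideal containment $(f) \subseteq I$, that is $f \in I$; since also $f \in B[k+1,n]$, we conclude $f \in I \cap B[k+1,n]$.

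The main obstacle --- and the reason the statement is not purely formal --- is precisely this last step: passing from ``$f$ vanishes on $\pi(Z(I))$'' to ``$f \in I$''. Over a general field this would fail without a radical or Nullstellensatz hypothesis, but here it is delivered by Corollary \ref{cor2}, which in turn rests on the fact that the Boolean quotient ring makes $J \mapsto Z(J)$ a bijection between ideals of $B[1,n]$ and subsets of $\mathbb{F}^n_2$. Everything else is the elementary observation that elements of $B[k+1,n]$ are well defined as functions on $\mathbb{F}^{n-k}_2$ and agree with their pullbacks along $\pi$.
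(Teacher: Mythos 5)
Your proof is correct and follows essentially the same route as the paper's: the easy inclusion is bookkeeping, and the reverse inclusion passes from vanishing on the projected zero set to vanishing on $Z(I)$ via constancy along fibres, then invokes Corollary \ref{cor2} to get membership in $I$. The only difference is that you carry out the argument directly for $k$ projections, whereas the paper does the case $k=1$ and notes the general case is similar.
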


\begin{proof}
We show this for the case when eliminating one variable, the general case follows in a similar manner. Clearly $I_2\supseteq I\cap B[2, n]$. Conversely let $f\in B[2, n]$ vanish on $\pi_1 (Z(I))$. Then $f$ must also vanish on $Z(I)$, where $f$ is regarded as a member of the extended ring $B[1,n]$. Therefore $f\in I$ by Corollary \ref{cor2}.
\end{proof}

A standard technique for computing elimination ideals is to use Gr\"{o}bner bases, which eliminate one \emph{monomial} at the time. 
Computing Gr\"{o}bner bases is computationally heavy because the degrees of the polynomials grow rapidly over the iterations. To deal with this problem we propose two algorithms which attempt to limit the degrees of polynomials that arise. 
Our solution is to not use all elements during elimination, but {\it discard high degree polynomials and only keep the low-degree ones.} We denote an ideal where the degree is restricted to some $\delta$ by $J^{\delta}$, whereas $J^{\infty}$ means that we allow \emph{all} degrees.

The benefit from our solution is that the elimination process gives us an algorithm with much lower complexity, at the cost of the following two disadvantages:

\begin{enumerate}
    \item 
    Discarding polynomials of degree $ > \delta$ 
    gives an ideal $J^{\delta}$ that is only contained in the elimination ideal $J^{\infty} = I\cap B[j+1, n]$ for $1< j\leq k$. It follows that $Z(J^{\delta})$ of the eliminated system contains all the projected solutions of the original set of equations, but it will also contain “false” solutions which will not fit the ideal $I$ when lifted back to ${\mathbb F}^n_2$, regardless of which values we assign to the eliminated variables.
    \item 
    Since the proposed algorithms expand the solution space to include false solutions, the worst case scenario is when we end up with an empty set of polynomials after eliminating a sequence of variables. This means that all constraints given by the initial $I$ have been removed, and we end up with the complete $\mathbb{F}_2^{n-k}$ as a solution space.
\end{enumerate}

It is important to note that \emph{not discarding} any polynomials will provide only the true solutions to the set where variables have been eliminated, which then can be lifted back to the solutions of the initial ideal $I$. The drawback of this approach is that we must be able to handle arbitrarily large polynomials, i.e high computational complexity.

Thus there is a tradeoff between the maximum degree $\delta$ allowed, and the proximity between the ``practical'' ideal $J^{\delta}$ 
and the true elimination ideal $I\cap B[k+1,n]$. 


In this paper we limit the degree to $\delta=3$, and we therefore consider the two sets
\[
F^3 = \{f^3_1 ,\ldots , f^3_{r_3}\}, F^2 = \{f^2_1,\ldots , f^2_{r_2}\}
\]
of polynomials, where the $f_i^3$'s all have degree $3$ and the $f_i^2$'s have degree $\leq 2$. Furthermore, the polynomials in $F^3$ and $F^2$ together generate the ideal $I = (F^3,F^2)$. We use the notation $F^i_{x_1},F^i_{\overline{x_1}}$, ($i=2,3$) to distinguish disjoint subsets that contain (resp. do not contain) any monomial with the variable $x_1$.  
In the following sections we are going to develop the mathematical framework for computing ideals $I_2,\ldots,I_k$ such that $I_j\subseteq I\cap B[j+1,n]$ starting from the ideal $I$, and such that the generators for each $I_j$ has degree $\leq 3$.

\section{Elimination Techniques}
\label{sec:elimination}
Several methods for solving systems of Boolean equations based on various approaches have been suggested. In \cite{XL} and \cite{XSL} the authors introduce the XL and XSL algorithms, respectively.  The basic idea is to multiply equations with enough monomials to re-linearize the whole system of Boolean equations. Our approach can be viewed as a specialized generalization of this approach.  By this we mean that since we bound the degree at $\leq 3$, the quadratic equations will only be multiplied with linear monomials.  The aim is to squeeze out as many quadratic and cubic polynomials which in turn can be used to solve the system of Boolean equations. Furthermore, we note that the elimination aspect is not considered in the XL and XSL approaches.

Our objective is to find as many polynomials in the ideal $I$ generated by $F^2$ and $F^3$ 
as possible, \emph{computing only with polynomials of degrees $\leq 3$}. This limits both the storage and computational complexity. 
Our approach to solve the system of equations 
\[ f_i^{\delta} =  0 , \delta=2,3\mbox{ and }i=1, \ldots, r_{\delta}\]
is to eliminate variables so that we find degree $\leq 3$ polynomials in $I_k$, in smaller and smaller Boolean rings $B[k+1, n]$. We introduce the algorithms for doing this in Section \ref{Algorithms}.

Let $F=(f_1,\ldots,f_m)$ be a set of Boolean functions in $B[1,n]$ of degree $\leq 3$, and denote by $\langle F\rangle$ the vector space spanned by the polynomials in $F$, where each monomial is regarded as a coordinate. Let $L = \{1, x_1 ,\ldots , x_n \}$ consist of the constant and linear functions in $B[1,n]$, such that $\langle L\rangle$ is the vector space spanned by the Boolean polynomials of degree $\leq 1$. For the set of quadratic polynomials $F^2$, we denote the product $LF^2$ as the set of all products $lg$ where $l\in L$ and $g\in F^2$. Then it suffices to eliminate variables from the vector space $\langle F^3\cup LF^2\rangle$. For convenience we let the set $F=(f_1,\ldots,f_m)$ be the set of cubic polynomials generated by $F^3\cup LF^2$.

Hence, as part of the variable elimination process, it will be necessary to split a set of polynomials according to different criteria. Two procedures, Algorithm~\ref{alg:splitx1}:$SplitVariable()$ and  Algorithm~\ref{alg:split23}:$SplitDeg2/3()$ are described in Appendix~A. $SplitVariable(F,x_1)$ splits a set $F$ of polynomials into the subsets $F_{x_1},F_{\overline{x_1}}$, where $F_{x_1}$ only has polynomials that contain $x_1$ and $F_{\overline{x_1}}$ consists of all polynomials not containing $x_1$. $SplitDeg2/3(F)$ splits a set $F$ of polynomials into the subsets $F^2, F^3$, where polynomials in $F^2$ are quadratic, and polynomials in $F^3$ have degree $3$.

These two procedures can essentially be implemented in terms of row reduction on the incidence matrices of $F$, where  the monomials (i. e. columns of the matrices) are ordered lexicographically and by degree, respectively. More details are provided in Appendix~A.


In order to eliminate variables from a system of Boolean functions we are going to use resultants, which eliminate one variable at the time from a pair of equations. Let $f_1=a_1x_1+b_1$ and $f_2=a_2x_1+b_2$ be two polynomials in $B[1,n]$, where the variable $x_1$ has been factored out. Then the polynomials $a_j$ and $b_j$ are in $B[2, n]$. In order to find the resultant, form the $2\times 2$ Sylvester matrix of $f_1$ and $f_2$ with respect to $x_1$
\[
\syl(f_1,f_2,x_1) = \left(\begin{array}{ccccc}
a_1 & a_2 \\
b_1 & b_2 \\
\end{array}\right)
\]

The \emph{resultant} of $f_1$ and $f_2$ with respect to $x_1$ is then simply the determinant of this matrix, and hence a polynomial in $B[2,n]$:
\[
\Res (f_1,f_2,x_1)= \det (\syl(f_1,f_2,x_1)) = a_1b_2+a_2b_1
\]

We note that $\Res(f_1, f_2, x_1 ) = a_2f_1 + a_1f_2$, which means that the resultant is indeed in the ideal generated by $f_1$ and $f_2$. Moreover, $\Res(f_1, f_2, x_1)$ is in the elimination ideal $I_2$ and when both $f_i$'s are quadratic, then the $a_i$'s are linear so the degree of the resultant $\Res(f_1,f_2,x_1)$ is $\leq 3$.

This observation gives hope for computational purposes, since $2\times 2$ determinants are easy to compute and cubic polynomials can be handled by a computer, also for the number of variables $n$ we encounter in cryptanalysis of block ciphers.  Note that one resultant computation eliminates all monomials containing the targeted variable, and not only the leading term, as in Gr\"{o}bner basis computations.

For an ideal $I(F) = (f_1 ,\ldots, f_m )\subseteq B[1,n]$ generated by a set $F$ of Boolean polynomials, we can compute the resultant of every pair of polynomials $\Res(f_i , f_j ; x_1 )$, which in turn gives us the ideal of resultants:

\[
\Res_2(F;x_1)=(\Res(f_i,f_j;x_1)|1\leq i<j\leq m).
\]

It is easy to show that the ideal of resultants is contained in the elimination ideal $I(F)\cap B[2, n]$, but this inclusion is in general strict. To close the gap we need the following ideal:

\begin{Definition}
Let $I(F) = (f_1 ,\ldots , f_m)\subseteq B[1,n]$, and write each $f_i$ as $f_i = a_ix_1 + b_i$, where $x_1$ does not occur in $a_i$ or $b_i$. We define the \emph{coefficient constraint} ideal:

\[
\Co_2(F)=(b_1 (a_1 + 1), b_2 (a_2 + 1),\ldots , b_m (a_m + 1)).
\]

\end{Definition}

Note that the degrees of the generators of $\Co_2(F)$ have the same degrees as the generators of the resultant ideal.  In the case when $I(F)$ consists of quadratic polynomials, the generators of $\Co_2(F)$ will be polynomials of degree $\leq 3$. 
The zero set of this ideal lies in the projection of the zero set of $I(F)$ onto $\mathbb{F}_2^{n-1}$.

\begin{Lemma}\label{Lemma2}
$Z(\Co_2(F))\supseteq \pi_1(Z(I(F))).$
\end{Lemma}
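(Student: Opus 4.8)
The plan is to show that every point in the projection $\pi_1(Z(I(F)))$ is a zero of each generator $b_i(a_i+1)$ of the coefficient constraint ideal $\Co_2(F)$. Since the zero set of an ideal is the intersection of the zero sets of its generators, it suffices to fix an arbitrary point $\mathbf{a}'=(a_2,\ldots,a_n)\in\pi_1(Z(I(F)))$ and verify that $b_i(\mathbf{a}')\bigl(a_i(\mathbf{a}')+1\bigr)=0$ in $\FF_2$ for every $i$.

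First I would unwind the definition of the projection: the point $\mathbf{a}'$ lies in $\pi_1(Z(I(F)))$ precisely when there exists a value $\epsilon\in\FF_2$ such that the lifted point $\mathbf{a}=(\epsilon,a_2,\ldots,a_n)$ belongs to $Z(I(F))$. Because $Z(I(F))=Z(f_1,\ldots,f_m)$, this lift satisfies $f_i(\mathbf{a})=0$ for all $i$. Using the decomposition $f_i=a_ix_1+b_i$ and evaluating at $\mathbf{a}$, I obtain the key relation $a_i(\mathbf{a}')\,\epsilon+b_i(\mathbf{a}')=0$ for every $i$, where I have used that $a_i,b_i$ do not involve $x_1$ so they only depend on $\mathbf{a}'$.

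The heart of the argument is then a short case analysis in $\FF_2$ on the two expressions, carried out for each index $i$ separately. Writing $\alpha=a_i(\mathbf{a}')$ and $\beta=b_i(\mathbf{a}')$, the relation reads $\alpha\epsilon+\beta=0$. If $\alpha=0$ this forces $\beta=0$, hence $\beta(\alpha+1)=0$ trivially; if $\alpha=1$ then $\alpha+1=0$, so again $\beta(\alpha+1)=0$. In either case the generator $b_i(a_i+1)$ vanishes at $\mathbf{a}'$, which is exactly what is needed. The conceptual point behind the definition is that $b_i(a_i+1)=0$ is precisely the condition for the linear equation $\alpha x_1+\beta=0$ in the single unknown $x_1$ to admit a solution over $\FF_2$, so the coefficient constraints encode solvability of the lift.

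The main obstacle here is not a difficult computation but rather getting the logical direction and the role of the eliminated variable right: one must be careful that $a_i,b_i\in B[2,n]$ evaluate only on the projected coordinates $\mathbf{a}'$, and that the existence of a common lift value $\epsilon$ working simultaneously for all $i$ is guaranteed by the definition of $\pi_1(Z(I(F)))$. I expect the verification itself to be routine once the quantifier structure is pinned down; the inclusion is one-directional, and indeed one should not expect the reverse inclusion to hold in general, which is consistent with the earlier remark that the resultant ideal (and hence $\Co_2(F)$ together with it) only approximates the full elimination ideal.
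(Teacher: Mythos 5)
Your proof is correct and is essentially the paper's argument: the paper phrases it contrapositively (if $a_i(\mathbf{p})=0$ and $b_i(\mathbf{p})=1$ then neither lift of $\mathbf{p}$ can satisfy $f_i=0$), while you argue directly from the existence of a lift $\epsilon$ with $a_i(\mathbf{a}')\epsilon+b_i(\mathbf{a}')=0$, but the underlying case analysis over $\FF_2$ is identical. No gaps.
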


\begin{proof}
Note that a point $\textbf{p}\in \mathbb{F}^{n-1}_2$ is not in the zero set $Z(\Co_2(F))$ only  if for some $i$ we have $a_i(\textbf{p}) = 0$ and $b_i (\textbf{p}) = 1$. But then for both the two liftings of $\textbf{p}$ to $\mathbb{F}^n_2$: $\textbf{p}_0=(0,\textbf{p})$ and $\textbf{p}_1=(1,\textbf{p})$ we have $f_i(\textbf{p}_j) = 1$. Therefore $\textbf{p}\notin \pi_1 (Z(I(F)))$, and so we must have $Z(\Co_2(F))\supseteq \pi_1(Z(I(F)))$.
\end{proof}

By Lemmas \ref{Lemma1} and \ref{Lemma2}, the coefficient constraint ideal is in the elimination ideal. We can now use this ideal to describe the full elimination ideal, which turns out to be generated exactly by $\Res_2(F)$ and $\Co_2(F)$.



\begin{Theorem}\label{prop:elim}

Let $I(F)=(f_1 ,\ldots , f_m)\subseteq B[1,n]$ be an ideal generated by a set $F$ of Boolean polynomials. Then 
\[
I(F )\cap B[2, n] = I(\Res_2(F ), Co_2(F )).
\]
\end{Theorem}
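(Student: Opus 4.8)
The plan is to prove the two inclusions separately, and throughout to exploit Corollary~\ref{cor2}, which lets us pass freely between ideals in $B[2,n]$ and their zero sets. Write $I_1 = I(F)\cap B[2,n]$ and $I_2 = I(\Res_2(F),\Co_2(F))$ for the two ideals in question.

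For the inclusion $I_2 \subseteq I_1$ (the easy direction), I would check the two families of generators in turn. Each resultant satisfies $\Res(f_i,f_j;x_1) = a_j f_i + a_i f_j$, as already noted, so it lies in $I(F)$; since it involves no $x_1$ it lies in $B[2,n]$, hence in $I_1$. For the coefficient constraints, Lemma~\ref{Lemma2} gives $\pi_1(Z(I(F)))\subseteq Z(\Co_2(F))$, i.e.\ every generator $b_i(a_i+1)$ vanishes on $\pi_1(Z(I(F)))$; by Lemma~\ref{Lemma1} this is exactly the condition for membership in $I_1$. Thus $I_2\subseteq I_1$.

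The reverse inclusion $I_1\subseteq I_2$ is where the real content lies, and I would reduce it to the single set-theoretic statement
\[
Z(\Res_2(F),\Co_2(F)) \subseteq \pi_1(Z(I(F))).
\]
Granting this, any $g\in I_1$ vanishes on $\pi_1(Z(I(F)))$ by Lemma~\ref{Lemma1}, hence on $Z(I_2)$, and then Corollary~\ref{cor2} (applied to the ideals $(g)$ and $I_2$ inside $B[2,n]$) forces $g\in I_2$. This sidesteps any direct manipulation of generators and defers everything to the zero-set computation.

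To establish the displayed inclusion --- the main obstacle --- I would take a point $\textbf{p}$ in the left-hand zero set and exhibit a lift of it lying in $Z(I(F))$. Abbreviating $\alpha_i=a_i(\textbf{p})$ and $\beta_i=b_i(\textbf{p})$ in $\mathbb{F}_2$, a lift $(\epsilon,\textbf{p})$ satisfies $f_i(\epsilon,\textbf{p})=\alpha_i\epsilon+\beta_i$. Vanishing of the coefficient constraints says $\beta_i(\alpha_i+1)=0$, so $\alpha_i=0$ forces $\beta_i=0$ and then $f_i$ vanishes at either lift. Vanishing of the resultants says $\alpha_i\beta_j+\alpha_j\beta_i=0$; restricted to the indices with $\alpha_i=\alpha_j=1$ this reads $\beta_i+\beta_j=0$, so all such $\beta_i$ share a common value $\epsilon_0$. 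Choosing $\epsilon=\epsilon_0$ (or $\epsilon=0$ if no $\alpha_i$ equals $1$) makes $f_i(\epsilon,\textbf{p})=0$ simultaneously for all $i$, whence $(\epsilon,\textbf{p})\in Z(I(F))$ and $\textbf{p}\in\pi_1(Z(I(F)))$. The crux is precisely this consistency check: the resultants guarantee that the requirements $\epsilon=\beta_i$ imposed by the indices with $\alpha_i=1$ never conflict, while the coefficient constraints dispose of the indices with $\alpha_i=0$, so together they certify that the natural candidate lift genuinely exists.
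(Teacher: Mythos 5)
Your proposal is correct and follows essentially the same route as the paper: the easy inclusion via the explicit ideal membership of the resultants and Lemma~\ref{Lemma2}, and the hard inclusion by lifting a point of $Z(\Res_2(F))\cap Z(\Co_2(F))$ using the fact that the coefficient constraints exclude $(a_i,b_i)(\textbf{p})=(0,1)$ while the resultants force all $b_i(\textbf{p})$ with $a_i(\textbf{p})=1$ to agree, then passing from zero sets back to ideals with Corollary~\ref{cor2}. Your phrasing of the case analysis (a single common value $\epsilon_0$ rather than the paper's two explicit cases) is a minor cosmetic difference only.
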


\begin{proof}
By Lemma \ref{Lemma1} we have

\[
\pi_1(Z(I(F)))=Z(I(F)\cap B[2,n]).
\]

We know that

\[
I(F)\cap B[2,n]\supseteq I(\Res_2(F),\Co_2(F)),
\]

which implies that

\[
\pi_1(Z(I(F)))=Z(I(F)\cap B[2,n])\subseteq Z(\Res_2(F))\cap Z(\Co_2(F)).
\]

Conversely, let a point $\textbf{p}\in \mathbb{F}^{n-1}_2$ in the right hand side above be given. Then it has two liftings to points in $\mathbb{F}^n_2$: $\textbf{p}_0=(0,\textbf{p})$ and $\textbf{p}_1=(1,\textbf{p})$. Let $f_i=x_1a_i+b_i$ be an element in $F$. Since $\textbf{p}$ vanishes on $\Co_2(F)$, the following are the possible values for the terms in $f_i$ when applied to the lifting $\textbf{p}_j$.

$$\begin{array}{ccc}
a_i&b_i&x_1\\
\hline
0&0&0\\
0&0&1\\
1&0&0\\
1&1&1\\ 
\end{array}$$

Note that $\Co_2(F)$ excludes $a_i(\textbf{p})$ and $b_i(\textbf{p})$ from taking the values $0,1$.  Since $\textbf{p}$ vanishes on the resultant ideal, there cannot be two $f_i$ and $f_j$ such that $a_i(\textbf{p}),b_i(\textbf{p})$ takes values $1,0$ and $a_j(\textbf{p}),b_j(\textbf{p})$ takes values $1,1$, since in that case the resultant $a_ib_j+a_jb_i$ does not vanish. This means that the values of $a_i(\textbf{p}),b_i(\textbf{p})$ are either $i)$ All $0,0$ or $1,0$, or $ii)$ All $0,0$ or $1,1$. In case $i)$, the lifting $\textbf{p}_0$ is in the zero set $Z(I(F))$. In case $ii)$ the lifting $\textbf{p}_1$ is in the zero set of $Z(I(F))$. This shows that $Z(\Res_2(F))\cap Z(\Co_2(F))$ lifts to $Z(I(F))$, which means that

\[
\pi_1(Z(I(F)))=Z(I(F)\cap B[2,n])\supseteq Z(\Res_2(F))\cap Z(\Co_2(F))
\]

as desired.
\end{proof}

Note that since we limit the degree to $\leq 3$, we only compute resultants and coefficient constraints with respect to the set $F^2$ consisting of quadratic Boolean functions. The process of producing the resultants and the polynomials of the coefficient constraints of the set $F^2$ with respect to the variable $x_1$ is described in Algorithm \ref{alg:res}.

\begin{algorithm}
   \caption{$RandC(F_{x_1}^2,x_1)$}
   \label{alg:res}
    \begin{algorithmic}
      \Require{$F_{x_1}^2=(f^2_1,\ldots,f^2_{r_2})$ set of quadratic polynomials in $B[1,n]$}
      \Ensure{Set $R$ of cubic polynomials where $\pi_1(Z(F_{x_1}^2))=Z(R)$ and $x_1\not\in R$}
      \State 
      \State $f^2_i=a_ix_1+b_i$ for $1\leq i\leq r_2$
      \State $R=\emptyset$
      \For{$(f^2_i,f^2_j)\in F_{x_1}^2\times F_{x_1}^2, f^2_i\neq f^2_j$}
        \State $R\leftarrow R\cup \{a_ib_j+a_jb_i$\}
    \EndFor
    \For{$f^2_i\in F_{x_1}^2$}
        \State $R\leftarrow R\cup \{b_i(a_i+1)\}$
    \EndFor
    \State Return $R$
    \end{algorithmic}
\end{algorithm}

In general, for an ideal $I(F)=(f_1 ,\ldots , f_m )\subseteq B[1,n]$, this process can obviously be iterated eliminating more variables from $I(F)$. We denote by the ideals $\Res_{k+1}(F)$ and $Co_{k+1}(F )$ the iterative application of the resultant and the coefficient constraint ideal with respect to a sequence $x_1,\ldots ,x_k$ of variables to be eliminated, with the initial polynomials from $I(F)$ as input. Note that both Lemma \ref{Lemma2} and Proposition \ref{prop:elim} easily generalize to this case. Hence we generalize Theorem \ref{prop:elim} as follows.

\begin{Corollary}
\label{cor:elim}
For $I(F) = (f_1 ,\ldots , f_m)$ in $B[1,n]$, then
\[
I(F )\cap B[k+1, n] = I(\Res_{k+1}(F ),Co_{k+1}(F )).
\]
\end{Corollary}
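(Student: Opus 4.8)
The plan is to argue by induction on the number $k$ of eliminated variables. The base case $k=1$ is precisely Theorem \ref{prop:elim}. For the inductive step the key preliminary observation is that Theorem \ref{prop:elim}, together with Lemma \ref{Lemma2}, is insensitive to the name of the variable being eliminated: replacing the roles of $x_1$ and $B[2,n]$ by those of $x_k$ and $B[k+1,n]$, the theorem asserts that for \emph{any} set $G$ of Boolean polynomials in $B[k,n]$ one has $I(G)\cap B[k+1,n] = I(\Res(G;x_k),\Co(G;x_k))$. This is exactly the relabeling the paper flags as routine.

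Next I would record the elementary ring-theoretic identity
\[
I(F)\cap B[k+1,n] = \bigl(I(F)\cap B[k,n]\bigr)\cap B[k+1,n],
\]
which holds because $B[k+1,n]\subseteq B[k,n]$, so that any polynomial lying in $I(F)$ and in $B[k+1,n]$ automatically lies in $B[k,n]$ as well. This reduces the elimination of $k$ variables to first eliminating $x_1,\ldots,x_{k-1}$ and then eliminating the single remaining variable $x_k$ from the resulting ring $B[k,n]$. Invoking the inductive hypothesis for $k-1$ eliminations gives $I(F)\cap B[k,n] = I(\Res_k(F),\Co_k(F))$. Writing $G = \Res_k(F)\cup \Co_k(F)$ for the generating set produced after $k-1$ rounds, so that $I(G)=I(F)\cap B[k,n]$, I would apply the single-variable version of Theorem \ref{prop:elim} to $G$ with respect to $x_k$ and combine it with the identity above to obtain
\[
I(F)\cap B[k+1,n] = I\bigl(\Res(G;x_k),\Co(G;x_k)\bigr).
\]

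Finally I would close the argument by matching the right-hand side with the claimed generators. By the very definition of the iterated ideals $\Res_{k+1}(F)$ and $\Co_{k+1}(F)$ as the resultants and coefficient constraints obtained by feeding the output of the $(k-1)$-st elimination round into the $k$-th, one has $\Res(G;x_k)=\Res_{k+1}(F)$ and $\Co(G;x_k)=\Co_{k+1}(F)$, which yields $I(F)\cap B[k+1,n] = I(\Res_{k+1}(F),\Co_{k+1}(F))$ and completes the induction.

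The substantive content is genuinely light: each individual elimination step is already handled by Theorem \ref{prop:elim}, and the nesting identity is what lets these steps be composed. The only point requiring care is purely definitional bookkeeping, namely verifying that the iterative construction of $\Res_{k+1}(F)$ and $\Co_{k+1}(F)$ coincides with applying the single-variable resultant-and-constraint construction to the previously computed generating set $G$; once this identification is granted, the induction goes through immediately, so I would not expect any real obstacle beyond stating the relabeling and the definitional match cleanly.
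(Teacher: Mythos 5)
Your induction is correct and is exactly the argument the paper intends (the paper itself only remarks that Lemma \ref{Lemma2} and Theorem \ref{prop:elim} ``easily generalize'' and gives no written proof): iterate the one-variable theorem via the nesting identity $I(F)\cap B[k+1,n]=\bigl(I(F)\cap B[k,n]\bigr)\cap B[k+1,n]$, with the inductive hypothesis supplying the generating set $G=\Res_k(F)\cup\Co_k(F)$ to which the single-variable step is applied. You have simply made explicit the relabeling and the definitional bookkeeping that the paper leaves implicit, so there is nothing to correct.
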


It is important to note that Corollary \ref{cor:elim} is only valid if we allow the degrees to grow with each elimination, but including the coefficient constraints solve the problem with the resultant only being a subset of the elimination ideal. This enables us to actually compute the elimination ideal \emph{not depending on any monomial order}. Moreover, one could find the elimination ideal by successively eliminating $x_1,\ldots ,x_k$ using Corollary \ref{cor:elim} by the following algorithm:
\begin{enumerate}
   \item $F_1=F$,
   \item $F_2=$ generators of $\Res_2(F_1) + Co_2(F_1)$,
   \item $F_3=$ generators of $\Res_3(F_2) + Co_3(F_2)$,
   \item $\cdots$
\end{enumerate}

However, applying this strategy in practice leads to problems due to the growing degrees of the resultants and coefficient constraints with each elimination. In fact if the $f_i$'s have degree $d$, the degrees of the resultants and the coefficient constraints have degree $2d-1$. With many variables the size of the polynomials and the number of monomials quickly become too large for a computer to work with. This is the reason why we limit the degree at $\leq 3$, enabling us to deal with these complexity issues. 

\section{Elimination Algorithms}\label{Algorithms}

\subsection{The L-Elim{\bf A}-algorithm}

In the following we are going to apply the procedure \noindent {\bf A.} below as a building block in order to produce as many polynomials of degree $\leq 3$ as we can when eliminating variables from the sets $F^3$ and $F^2$.

\medskip

\noindent {\bf A.} 
We compute two sets $F^2_{\overline{x_1}}$ and $F^3_{\overline{x_1}}$ of quadratic and cubic polynomials in $\Bton$, that satisfy

\[\langle F^2_{\overline{x_1}} \cup F^3_{\overline{x_1}}\rangle =\langle F^3 \cup L F^2 \rangle \cap \Bton. \] 

This gives a new pair of sets $F^3_{\overline{x_1}}$ and $F^2_{\overline{x_1}}$, but now in the smaller ring $\Bton$. This procedure can be continued giving $F^3_{\overline{x_1},\overline{x_2},\ldots,\overline{x_k}}, F^2_{\overline{x_1},\overline{x_2},\ldots,\overline{x_k}}$ in smaller and smaller Boolean rings $B[k+1,n]$. The computation of $F^2_{\overline{x_1}}$ and $F^3_{\overline{x_1}}$ is the main objective of our algorithm. In Algorithm \ref{alg:elimv1x1} we show how we perform this procedure in L-Elim{\bf A}.

\begin{algorithm}
    \caption{$L-Elim{\bf A}(F^3,F^2,x_1)$}
    \label{alg:elimv1x1}
    \begin{algorithmic}
    \Require{$F^3=(f^3_1,\ldots,f^3_{r_3})$ set of cubic polynomials in $B[1,n]$, $F^2=(f^2_1,\ldots,f^2_{r_2})$ set of quadratic polynomials in $B[1,n]$, and $x_1$ the variable to be eliminated from $F^3$ and $F^2$}
    \Ensure{Set $F^3_{\overline{x_1}}$ of cubic polynomials and set $F^2_{\overline{x_1}}$ of quadratic polynomials, where $x_1\not\in F^2_{\overline{x_1}} \cup F^3_{\overline{x_1}}$}
        \State $L=\{1, x_1 ,\ldots , x_n \}$ 
        \State $F^*\leftarrow F^3\cup L\cdot F^{2}$ 
        \State $F^{2}, F^3 \leftarrow SplitDeg2/3(F^*)$
        
        \State $F^2_{x_1}, F^2_{\overline{x_1}} \leftarrow SplitVariable (F^2,x_1)$
        \State $F^3_{x_1}, F^3_{\overline{x_1}} \leftarrow SplitVariable (F^3,x_1)$ 
       \State Return $F^3_{\overline{x_1}}, F^2_{\overline{x_1}}$
    \end{algorithmic}
\end{algorithm}

\begin{Remark}{\bf Computational complexity:}
The heaviest step in $L-Elim{\bf A}$ is the call to $SplitDeg2/3()$.  This procedure does Gauss elimination on $\mathcal{O}(n^3)$ columns in a matrix with $\mathcal{O}(n^3)$ rows.  In total we need $\mathcal{O}(n^9)$ bit operations for running $L-Elim{\bf A}.$
\end{Remark}

\begin{Remark}\label{REMARK1}
Algorithm \ref{alg:elimv1x1} is related to the XL and XLS algorithm \cite{XL},\cite{XSL}, where we restrict the algorithm here by only multiplying with linear Boolean monomials $L$. However, our approach contains the aspect of elimination of variables which is not considered in XL-type algorithms. 
\end{Remark}

Next, we proceed to develop a more efficient algorithm. In fact, the next construction optimizes the elimination approach by showing that we do not need to multiply with all variables as done in $L-Elim{\bf A}()$.

\subsection{Main elimination algorithm {\bf A}}

Motivated by Algorithm~\ref{alg:elimv1x1}, we aim in a similar fashion to produce more polynomials of degree $\leq 3$ for the sets $F^3,F^2$, but in a more efficient way. We use resultants, coefficient constraints and some other steps to be detailed below. The ensuing algorithm $eliminate{\bf A}()$ is presented in Algorithm~\ref{alg:main}.

The inputs to $eliminate{\bf A}()$ are sets $F^3$ and $F^2$, of polynomials of degree 3 and 2 respectively.  These sets will be modified in the algorithm to only include polynomials without $x_1$, the variable to be eliminated.  We also want the sets to be as large as possible, but of course only with linearly independent polynomials.

$eliminate{\bf A}()$ starts by splitting $F^2$ into subsets $F^2_{x_1}$ containing $x_1$ and $F^2_{\overline{x_1}}$ not containing $x_1$, using $SplitVariable()$.  These sets are first used to increase $F^3$, by adding $x_1F^2_{\overline{x_1}}$ and $(x_1+1)F^2_{x_1}$ to $F^3$.  Note that we multiply $F^2$ with only one variable ($x_1$, to be eliminated), and not all of $L$ as in $L-Elim{\bf A}()$.  This leads to much lower space complexity in $eliminate{\bf A}()$, since the $F^3_{\overline{x_1}}$ computed here is much smaller than the one output from $L-Elim{\bf A}()$. Then we split $F^3$ into subsets $F^3_{x_1}$ containing $x_1$ and $F^3_{\overline{x_1}}$ not containing $x_1$, using $SplitVariable()$.

Next, $F^3_{x_1}$ is normalized (see Algorithm~\ref{alg:norm}: $Normalize()$ in Appendix~B) with $F^2_{x_1}$ as basis.  This removes many of the degree $3$ monomials containing $x_1$ in $F^3_{x_1}$, and if a polynomial on the special form $f_{\infty}=x_1+g(x_2,\ldots,x_n)$ is found in $F^2_{x_1}$ all degree $2$ monomials $x_1x_i$ will be eliminated from $F^3_{x_1}$. The output of $Normalize()$ is $F^{3,norm}_{x_1}$ and $F^{3,norm}_{\overline{x_1}}$, and we join $F^{3,norm}_{\overline{x_1}}$ to $F^3_{\overline{x_1}}$.

Now we compute resultants and coefficient constraints from the set $F^2_{x_1}$, creating a set $R$ of cubic polynomials in $\Bton$ by using Algorithm~\ref{alg:res} $RandC()$. This produces as many polynomials without $x_1$ as possible. All of these are added to $F^3_{\overline{x_1}}$ and we remove potentially linearly dependent polynomials in $F^3_{\overline{x_1}}$ such that

\begin{equation}\label{newF2F3}
F^3_{\overline{x_1}}:=F^3_{\overline{x_1}}\cup F^{3,norm}_{\overline{x_1}}\cup R
\end{equation}

The two new sets $F^2_{\overline{x_1}},F^3_{\overline{x_1}}$ in $\Bton$, neither containing $x_1$, are returned from $eliminate{\bf A}()$, as described in Algorithm \ref{alg:main}.
In the following we show that in fact the output of $eliminate{\bf A}()$ and $L-Elim{\bf A}()$ is the same.

\begin{Theorem} \label{pro:FLG} 
Let $F^2, F^3$ be the input to $eliminate{\bf A}()$.  
Let $F^2_{\overline{x_1}}$ be the subset of $F^2$ not containing $x_1$ and let $F^3_{\overline{x_1}}$ be the defined as in (\ref{newF2F3}). Then
\[ \langle F^3_{\overline{x_1}} \cup L_{\overline{x_1}} F^2_{\overline{x_1}} \rangle = 
\langle F^3 \cup L F^2 \rangle \cap \Bton.
\]
\end{Theorem}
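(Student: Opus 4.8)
The plan is to treat both sides as finite‑dimensional $\Fto$-subspaces of the degree‑$\le 3$ part of $B[1,n]$ and to prove the two inclusions separately. Throughout I would write each polynomial $p$ uniquely as $p = x_1\,\phi(p) + v(p)$ with $\phi(p),v(p)\in\Bton$, so that $\phi$ is the $\Fto$-linear ``coefficient of $x_1$'' map and $v$ the ``$x_1$-free part''; then $\langle F^3\cup LF^2\rangle\cap\Bton$ is exactly the kernel of $\phi$ on $\langle F^3\cup LF^2\rangle$. I would also record the two identities that drive everything: writing $f_i=a_ix_1+b_i$ with $a_i,b_i\in\Bton$, one has $\Res(f_i,f_j,x_1)=a_jf_i+a_if_j$ and $b_i(a_i+1)=(a_i+1)f_i$, together with Boolean idempotency $a_i^2=a_i$ (hence $(a_i+1)a_i=0$), which is precisely what makes the coefficient constraint $x_1$-free.

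For the inclusion ``$\subseteq$'' (\emph{soundness}) I would check that each generator on the left lies in $\langle F^3\cup LF^2\rangle\cap\Bton$. The generators $L_{\overline{x_1}}F^2_{\overline{x_1}}$ are products of a linear form in $L$ with a member of $F^2$, hence lie in $\langle LF^2\rangle$, and are manifestly $x_1$-free. For the cubic part defined in (\ref{newF2F3}): the original $x_1$-free cubics are in $F^3$; the resultants and coefficient constraints in $R$ lie in $\langle LF^2\rangle$ by the identities above and are $x_1$-free by construction; and the normalized cubics $F^{3,norm}_{\overline{x_1}}$ are, by the specification of $Normalize()$, $\Fto$-linear combinations of $F^3_{x_1}$ and linear multiples of $F^2_{x_1}$ that happen to be $x_1$-free, so they again lie in $\langle F^3\cup LF^2\rangle\cap\Bton$.

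The real work is ``$\supseteq$'' (\emph{completeness}). I would take a general $p\in\langle F^3\cup LF^2\rangle\cap\Bton$, expand it in the generators, and collect the $L$-multipliers attached to each quadratic into a single factor $m_g\in\langle L\rangle$. Separating the generators by whether they involve $x_1$, the summands that are already $x_1$-free (those from $F^3_{\overline{x_1}}$ and $L_{\overline{x_1}}F^2_{\overline{x_1}}$) land in the left-hand side at once, while the condition $\phi(p)=0$ becomes a single relation among the $x_1$-coefficients: the quadratic coefficients $c_h$ of the cubics in $F^3_{x_1}$, the terms $m'_ia_i$ coming from $L_{\overline{x_1}}$-multiples of $f_i\in F^2_{x_1}$, and the terms $g$ and $a_i+b_i$ coming from the $x_1$-multiples $x_1F^2_{\overline{x_1}}$ and $x_1F^2_{x_1}$. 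After $SplitVariable()$ I may assume the $a_i$ are $\Fto$-linearly independent (equivalently $\langle F^2_{x_1}\rangle\cap\Bton=0$), which rules out degenerate cancellations. The $x_1$-free residue of $p$ is then exactly $\sum_h\mu_h d_h+\sum_i m'_i b_i$, and the relation must force it into $\langle R\cup F^{3,norm}_{\overline{x_1}}\rangle$.

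I would resolve this relation layer by layer. The $x_1$-multiples $x_1F^2_{\overline{x_1}}$ and $(x_1+1)F^2_{x_1}$ are precisely the cubics prepended to $F^3$ in step~2, so the terms $g$ and $a_i+b_i$ are absorbed into the cubic layer and normalized alongside the $c_h$; the purely quadratic cancellations among the $f_i\in F^2_{x_1}$ decompose, using the degree‑$2$ part of the relation and the $\Fto$-independence of the $a_i$, into pairwise cancellations $a_if_j+a_jf_i$ (the resultants) and self-cancellations $(a_i+1)f_i$ (the coefficient constraints), idempotency collapsing the degenerate self-terms to nothing new. The mixed cancellations of a cubic coefficient $c_h$ against the linear span of the $a_i$ are exactly what $Normalize()$ performs, reducing $c_h$ modulo linear multiples of the $a_i$ and emitting an element of $F^{3,norm}_{\overline{x_1}}$ whenever the coefficient reduces to $0$. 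I expect the main obstacle to be proving that these elementary cancellations generate \emph{all} of them: that every degree‑$\le 1$ syzygy of the $x_1$-coefficients is a combination of Koszul syzygies, self-annihilation syzygies, and normalization reductions. This is the one genuinely non-formal point, where $\Fto$-independence of the $a_i$, the restriction to linear multipliers, and Boolean idempotency must be combined; a pointwise evaluation over $\Fto^{n-1}$ together with a degree decomposition of the relation is the tool I would use to pin it down.
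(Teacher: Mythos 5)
Your setup is sound and matches the paper's strategy in outline: the inclusion $\subseteq$ is a generator-by-generator check (the identities $\Res(f_i,f_j,x_1)=a_jf_i+a_if_j$ and $(a_i+1)f_i=(a_i+1)b_i$ via idempotency are exactly right), and the substance lies in showing that an arbitrary $x_1$-free element of $\langle F^3\cup LF^2\rangle$ is accounted for by the resultants, coefficient constraints, and $3$-normal forms. But that is precisely where your proposal stops being a proof. You reduce the problem to the claim that every linear-multiplier syzygy of the $x_1$-coefficients is generated by Koszul syzygies, self-annihilation syzygies, and normalization reductions, and then you explicitly defer it (``I expect the main obstacle to be proving that\dots'', ``the tool I would use''). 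That claim \emph{is} the theorem; asserting that the degree-$2$ part of the relation ``decomposes into pairwise cancellations'' is not automatic in the Boolean ring, where idempotency creates non-Koszul syzygies (this is exactly why the coefficient constraints must be included at all). Nothing in the proposal establishes the decomposition.

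The paper closes this gap with a concrete monomial-cancellation induction that your sketch does not contain: after normalizing, one examines the multipliers $x_2,x_3,\ldots$ in turn and argues that if $x_kf^2_i$ appears, then the top monomial $x_1x_kx_i$ must cancel against a specific other term (because $p$ and $f^{3,norm}_{x_1}$ contain no $x_1$-monomials), which forces the presence of exactly one resultant $a_if^2_j+a_jf^2_i$ or one coefficient constraint $(a_i+1)b_i$, which is then subtracted; iterating kills all terms $x_kf^2_i$, then all bare $f^2_i$, and finally forces $f^{3,norm}_{x_1}=0$. Your proposed substitute tool, pointwise evaluation over $\FF_2^{n-1}$, is ill-suited here: the statement is an equality of $\FF_2$-linear spans inside the space of degree-$\le 3$ polynomials, and evaluation arguments detect zero sets (ideals), not linear spans, so they cannot certify that a given $x_1$-free combination is reachable using only degree-$\le 3$ intermediate computations. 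To complete the proof you would need to supply the cancellation induction (or an equivalent explicit description of the linear syzygies of $(a_i, a_i+b_i, c_h)$ over the Boolean ring), which is the genuinely hard content.
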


\begin{proof}
The fact that $\langle F^3 \cup L F^2 \rangle \cap \Bton \supseteq \langle F^3_{\overline{x_1}} \cup L_{\overline{x_1}} F^2_{\overline{x_1}} \rangle$ is obvious from the construction.

To prove the converse, if a polynomial $f^2 \in F^2_{x_1}$ has leading term $x_1x_i$ we denote the polynomial as $f^2 = f^2_i=a_ix_1+b_i$. Similarly, if it has leading term $x_1$  we denote it by $f^2 = f^2_\infty$. With this notation we let  $I \sus \{ 2, \ldots,n\} \cup \{\infty \}$ be the index set of these polynomials, such that $F^2_{x_1} = \{ f^2_i \}_{i \in I}$. Let $F^{2}_{\overline{x_1}}$ be $\{ f^{2}_j\}_{j \in J}$ for some index set $J$ such that $J\cap I=\emptyset$. 

Let $p$ be a polynomial in $\langle F^3 \cup L F^2 \rangle \cap B[2,n]$. We can then write  
\[
p = f^3 + f^{3}_{x_1} + f^{3}_{\overline{x_1}}
\]
where $f^3 \in F^3$, $f^3_{x_1} \in \langle L \cdot  F^2_{x_1} \rangle$, and $f^{3}_{\overline{x_1}} \in \langle L \cdot  F^{2}_{\overline{x_1}} \rangle$. The goal is to subtract from $p$ the terms in $F^3_{\overline{x_1}} \cup L_{\overline{x_1}} F^2_{\overline{x_1}}$ which is produced according to (\ref{newF2F3}). In the end we will show that we are left with $p=0$, which proves that $p$ is originally in $\langle F^3_{\overline{x_1}} \cup L_{\overline{x_1}} F^2_{\overline{x_1}}\rangle$.

With $p$ as above we can find index sets $I^0, I^1 \sus I$ and $J^0 \sus J$, and
$K \sus \{2,\ldots, n\}  \times I$ such that 

\begin{align*}
f^3_{x_1} & = \sum_{i \in I^0} (x_1 +1)f^2_i + \sum_{i \in I^1} f^2_i + \sum_{(k,i) \in K} x_k f^2_i, \\
f^{3}_{\overline{x_1}} & = \sum_{j \in J^0} x_1 f^{2}_j + f^{3\prime},
\end{align*}

where $f^{3\prime} \in \langle \{ 1, x_2, \ldots, x_n \} F^2_{\overline{x_1}} \rangle = \langle L_{\overline{x_1}} \cdot F^2_{\overline{x_1}} \rangle$.

By the normalization we may write the following part of $p$:
\[
f^3 + \sum_{i \in I^0} (x_1 +1)f^2_i 
+ \sum_{j \in J^0} x_1 f^{2}_j
\]
as 

\[
f^{3,norm}_{x_1} + \sum_{(k,i) \in K^\prime} x_kf^2_i + f^{3,norm}_{\overline{x_1}} 
\]
for some index set $K^\prime$ with all $k \in \{2, \ldots, n\}$, and where $f^{3,norm}_{x_1} \in \langle F^{3,norm}_{x_1} \rangle$ and $f^{3,norm}_{\overline{x_1}} \in \langle F^3_{\overline{x_1}} \cup F^{3,norm}_{\overline{x_1}} \rangle$. So we have 
\begin{equation} \label{eq:Algfgg}
p  = f^{3,norm}_{x_1} + \sum_{i \in I^1} f^2_i + \sum_{(k,i) \in K^{\prime\prime}} x_j f^2_i + f^{3\prime} + f^{3,norm}_{\overline{x_1}} 
\end{equation}
where $K^{\prime \prime}=K\cup K^{\prime}$. 

Now we consider the terms above with $f^2_2$. Suppose $x_2 f^2_2$ occurs in the sum on the right. Note that $x_1 x_2$ is not in $p$ (since $p \in B[2,n]$), and not in $f^{3,norm}_{x_1}$ (since it is $3$-normal). So $x_1x_2$ in $x_2 f^2_2$ must cancel against $x_1 x_2$ in $f^2_2$. Then $f^2_2 + x_2f^2_2 = (x_2 +1) f^2_2$ occurs. Now
\[
(x_2+1)f^2_2 = (a_2 +1) f^2_2 + \mbox{smaller terms than } x_1x_2.
\]
We now subtract $(a_2+1)f^2_2=(a_2+1)b_2$ (which is in $\Bton$) from both sides of 
\eqref{eq:Algfgg}. So its new left side is:
\[
p := p - (a_2+1)f^2_2, 
\] 
and the new right side of \eqref{eq:Algfgg} will contain neither $f^2_2$ nor $x_2 f^2_2$
(while it may contain $x_j f^2_2$ for $j \geq 3$).
It follows that the left side of \eqref{eq:Algfgg} above will not contain $x_2 f^2_2$. 

If now in the new equation \eqref{eq:Algfgg}, $x_2 f^2_3$ occurs, then $x_1x_2x_3$ must cancel against $x_3f^2_2$ (since $x_2 f^2_2$ does not occur any more on the right side of \eqref{eq:Algfgg}, and $x_1x_2x_3$ does not occur in $p$ nor in $f_{\overline{x_1}}^{3,norm}$). We substract the resultant $a_2 f^2_3 + a_3f^2_2$ from both sides
of \eqref{eq:Algfgg}. So its new left side is:
\[
p := p - (a_2 f^2_3 + a_3 f^2_2).
\] 

Then neither $x_2 f^2_2$ nor $x_2 f^2_3$ will occur anymore on the right side of \eqref{eq:Algfgg}. In this way we continue and no term $x_2 f^2_i$ with $2 \leq i \leq n$ will occur in the right side of \eqref{eq:Algfgg}. If $x_2 f^2_\infty$ occurs, then $x_1 x_2$ must cancel against the same term in $f^2_2$. We then subtract the corresponding resultant from both sides of \eqref{eq:Algfgg} and its new left side is:
\[
p := p - (a_2 f^2_{\infty} + f^2_2).
\]
We may then assume that no terms in the right side of \eqref{eq:Algfgg} contains any $x_2 f^2_i$. Then note the following: The right side of \eqref{eq:Algfgg} does not contain $x_3 f^2_2$ since if it did the term $x_3 x_1 x_2$ could not cancel against anything else on the right side.

\medskip
Now we continue and remove terms $x_3 f^2_i$ from the right side of \eqref{eq:Algfgg}. Then we remove terms $x_4 f^2_i$ and so on. Considering the $p$ in \eqref{eq:Algfgg}, we then get that modulo $\Co_{2}$ and 
$\Res_{2}$ we can write it as

\[
p = f^{3,norm}_{x_1} + \sum_{j \in I^\prime} f^2_j + f^{3\prime} + f^{3,norm}_{\overline{x_1}} 
\]
for some $I^\prime \sus I$. If $f^2_2$ occurs, the terms $x_1x_2$ could not cancel against anything in $f^{3,norm}_{x_1}$, thus $f^2_2$ does not occur. If $f^2_3$ occurs, the term $x_1 x_3$ could not cancel against anyting on the right side above, since $f^2_2$ does not occur. Hence $f^2_3$ does not occur. In this way we could continue and in the end get 

\[
p = f^{3,norm}_{x_1} + f^{3\prime} + f^{3,norm}_{\overline{x_1}}.
\]
But then clearly $f^{3,norm}_{x_1}$ equals $0$. Thus modulo 
$\langle \Co_{2} \cup \Res_{2} \rangle$ the original $p$ is in $\langle L_{\overline{x_1}} \cdot F^2_{\overline{x_1}} \rangle + \langle F^3_{\overline{x_1}} \cup F^{3,norm}_{\overline{x_1}} \rangle$. This proves the Theorem.
\end{proof}

\begin{Corollary}\label{pro:FLG:Severalvariables} 
Let $L_{\overline{x_1},\ldots ,\overline{x_{k}}} = \{1, x_{k+1}, \ldots x_n \}$ and $F^3_{\overline{x_1},\ldots ,\overline{x_{k}}}$, $F^2_{\overline{x_1},\ldots,\overline{x_k}}$ be the result of applying $eliminate{\bf A}()$ $k$ times to the input sets $F^3$, $F^2$. Then
\[
\langle F^3 \cup L F^2 \rangle \cap B[k+1,n] = \langle F^3_{\overline{x_1},\ldots ,\overline{x_{k}}} \cup L_{\overline{x_1},\ldots ,\overline{x_{k}}} F^2_{\overline{x_1},\ldots ,\overline{x_{k}}} \rangle.
\]

\end{Corollary}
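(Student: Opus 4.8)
The plan is to prove the corollary by induction on $k$, with Theorem \ref{pro:FLG} serving as the base case $k=1$. The essential observation is that $eliminate{\bf A}()$ treats its eliminated variable uniformly: nothing in the statement or the proof of Theorem \ref{pro:FLG} uses any special property of $x_1$ beyond its being the variable that is factored out as $f^2_i = a_i x_1 + b_i$ and then removed. Hence Theorem \ref{pro:FLG} applies verbatim inside any of the smaller Boolean rings $B[k,n]$, with the variable $x_k$ playing the role formerly played by $x_1$ and with the linear span $L_{\overline{x_1},\ldots,\overline{x_{k-1}}} = \{1, x_k, \ldots, x_n\}$ playing the role of $L$.

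For the inductive step I would assume the statement for $k-1$, namely
\[
\langle F^3 \cup L F^2 \rangle \cap B[k,n] = \langle F^3_{\overline{x_1},\ldots ,\overline{x_{k-1}}} \cup L_{\overline{x_1},\ldots ,\overline{x_{k-1}}} F^2_{\overline{x_1},\ldots ,\overline{x_{k-1}}} \rangle,
\]
and then apply $eliminate{\bf A}()$ one more time, to the sets $F^3_{\overline{x_1},\ldots ,\overline{x_{k-1}}}$ and $F^2_{\overline{x_1},\ldots,\overline{x_{k-1}}}$, to eliminate $x_k$ inside the ring $B[k,n]$. By the reindexed Theorem \ref{pro:FLG} this produces
\[
\langle F^3_{\overline{x_1},\ldots ,\overline{x_{k}}} \cup L_{\overline{x_1},\ldots ,\overline{x_{k}}} F^2_{\overline{x_1},\ldots ,\overline{x_{k}}} \rangle = \langle F^3_{\overline{x_1},\ldots ,\overline{x_{k-1}}} \cup L_{\overline{x_1},\ldots ,\overline{x_{k-1}}} F^2_{\overline{x_1},\ldots ,\overline{x_{k-1}}} \rangle \cap B[k+1,n].
\]
Substituting the inductive hypothesis into the right-hand side rewrites it as $\left( \langle F^3 \cup L F^2 \rangle \cap B[k,n] \right) \cap B[k+1,n]$, which collapses to $\langle F^3 \cup L F^2 \rangle \cap B[k+1,n]$ thanks to the chain of subring inclusions $B[1,n] \supseteq B[2,n] \supseteq \cdots \supseteq B[k+1,n]$. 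This is precisely the claimed identity, completing the induction.

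The step I expect to require the most care is the justification that Theorem \ref{pro:FLG} may legitimately be invoked in its reindexed form inside $B[k,n]$. One must check that the input to the $k$-th call of $eliminate{\bf A}()$ genuinely satisfies the hypotheses of the theorem in that ring: the sets $F^3_{\overline{x_1},\ldots,\overline{x_{k-1}}}$ and $F^2_{\overline{x_1},\ldots,\overline{x_{k-1}}}$ are, respectively, cubic and quadratic polynomials in $B[k,n]$ containing none of $x_1,\ldots,x_{k-1}$, and the algorithm's multiplication set in that ring is exactly $L_{\overline{x_1},\ldots,\overline{x_{k-1}}}$ and not the full $L$. Once this bookkeeping of which ring and which linear span are in force at each stage is made explicit, the remainder is the routine telescoping of intersections described above, and no algebraic content beyond Theorem \ref{pro:FLG} is needed.
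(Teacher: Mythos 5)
Your proof is correct and takes essentially the same route as the paper: the paper's own argument applies Theorem \ref{pro:FLG} iteratively to $x_1$, then $x_2$, and so on, telescoping the intersections $\langle F^3 \cup L F^2\rangle \cap B[2,n] \cap B[3,n] \cap \cdots$ exactly as you do, merely phrased as "continuing this way" rather than as a formal induction. Your explicit attention to the reindexing bookkeeping (which ring and which linear span are in force at each stage) is a point the paper leaves implicit, but it introduces no new idea.
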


\begin{proof}
Given a sequence of variables $x_1,x_2,\ldots ,x_k$ to be eliminated. Then applying Theorem \ref{pro:FLG} on $x_1$ gives us $\langle F^3 \cup L F^2 \rangle \cap \Bton=\langle F^3_{\overline{x_1}} \cup L_{\overline{x_1}} F^2_{\overline{x_1}} \rangle$. Applying Theorem \ref{pro:FLG} on the next variable $x_2$ on $\langle F^3_{\overline{x_1}} \cup L_{\overline{x_1}} F^2_{\overline{x_1}} \rangle$ we get 
$$\langle F^3 \cup L F^2 \rangle \cap B[3,n]=\langle F^3_{\overline{x_1}} \cup L_{\overline{x_1}} F^2_{\overline{x_1}} \rangle\cap B[3,n]=\langle F^3_{\overline{x_1},\overline{x_{2}}} \cup L_{\overline{x_1},\overline{x_{2}}} F^2_{\overline{x_1},\overline{x_{2}}} \rangle.$$

Continuing this way for the rest of the variables to be eliminated, it follows that $\langle F^3 \cup L F^2 \rangle \cap B[k,n]=\langle F^3_{\overline{x_1},\ldots ,\overline{x_{k}}} \cup L_{\overline{x_1},\ldots ,\overline{x_{k}}} F^2_{\overline{x_1},\ldots ,\overline{x_{k}}} \rangle$ as desired.

\end{proof}

\begin{algorithm}
   \caption{$eliminate{\bf A}(F^3,F^2,x_1)$}
   \label{alg:main}
    \begin{algorithmic}
    \Require{$F^3=(f^3_1,\ldots,f^3_{r_3})$ set of cubic polynomials in $B$, $F^2=(f^2_1,\ldots,f^2_{r_2})$ set of quadratic polynomials in $B$, $x_1$ variable to be eliminated from $F^3$ and $F^2$}
    \Ensure{Set $F^3_{\overline{x_1}}$ of cubic polynomials where $x_1\not\in F^3_{\overline{x_1}}$ and set $F^2_{\overline{x_1}}$ of quadratic polynomials where $x_1\not\in F^2_{\overline{x_1}}$}
    \State 
    \State $F^2_{x_1}, F^2_{\overline{x_1}}\leftarrow SplitVariable(F^2,x_1)$\Comment{The $f^2_i\in F^2_{x_1}$ will have unique leading monomials containing $x_1$}
        \State $F^3\leftarrow (x_1+1)F^2_{x_1}\cup x_1F^2_{\overline{x_1}}\cup F^3$
        \State $F^3_{x_1},F^3_{\overline{x_1}}\leftarrow SplitVariable(F^3,x_1)$
        \State $F^{3,norm}_{\overline{x_1}},F^{3,norm}_{x_1}\leftarrow Normalize(F^3_{x_1},F^2_{x_1})$
        \State $F^3_{\overline{x_1}}\leftarrow R\_and\_C(F^2_{x_1},x_1)\cup F^3_{\overline{x_1}} \cup F^{3,norm}_{\overline{x_1}}$
    \State Return $F^3_{\overline{x_1}},F^2_{\overline{x_1}}$
    \end{algorithmic}
\end{algorithm}

The output comprises sets $F^2_{\overline{x_1}}$ and $F^3_{\overline{x_1}}$ of polynomials of degree $2$ and $3$ respectively. These are nontrivial polynomials of the same degree as the input polynomials and neither set contains the variable $x_1$.  The two sets satisfy $\langle F^2_{\overline{x_1}},F^3_{\overline{x_1}}\rangle\subseteq I\cap B[2,n]$. Algorithm \ref{alg:main} can be iterated as shown in Corollary \ref{pro:FLG:Severalvariables}, eliminating one variable from the system at the time, in any given order. An important note is that when starting with an MQ system (i.e. $F^3=\emptyset$) and eliminating only one variable, we do not throw away any polynomials in Alg. \ref{alg:main}.  Then we actually compute the full elimination ideal and are certain to preserve the initial solution space.

For complexity, we have ${\mathcal O}(n^3)$ and ${\mathcal O}(n^2)$ monomials and polynomials in $F^3_{\overline{x_1}}$ and $F^2_{\overline{x_1}}$, respectively.  If there were more polynomials than monomials we could solve the system by re-linearization.  Hence the space complexity of the algorithm is storing ${\mathcal O}(n^6)$ monomials which in practice is storing ${\mathcal O}(n^6)$ bits.

The time complexity for normalization can be estimated as follows: There are at most $n$ different $f^2_i$ in $F_{x_1}$.  For each of the ${\mathcal O}(n^3)$ polynomials in $F^3$, there may be $n$ monomials containing the leading term of $f^2_i$.  Each of these needs to be cancelled by adding a multiple of $f^2_i$, costing ${\mathcal O}(n^3)$ bit operations.  The total worst-case complexity for the normalization step is then ${\mathcal O}(n\times n^3\times n\times n^3)={\mathcal O}(n^8)$ bit operations.

The time complexity for computing resultants and coefficient constraints can be estimated similarly to also be ${\mathcal O}(n^8)$.

The time complexity for $SplitVariable()$ can be estimated as follows.  In the worst case, we have input size ${\mathcal O}(n^3)$ in both polynomials and monomials, so the matrices constructed are of size ${\mathcal O}(n^3)\times{\mathcal O}(n^3)$.  In the Gaussian reduction we need to create $0$'s under leading $1$'s in $\mathcal{O}(n^2)$ columns (those corresponding to monomials $x_1x_ax_b$), and this costs $\mathcal{O}(n^3\times n^3\times n^2)=\mathcal{O}(n^8)$ bit operations.  Hence the total time complexity for Algorithm  \ref{alg:main} is ${\mathcal O}(n^8)$ bit operations.

\subsection{Extensions of L-Elim{\bf A}(): L-Elim{\bf B}()}

In this section we improve $L-Elim{\bf A}()$ and $eliminate{\bf A}()$ by adding the heuristic procedure \noindent {\bf B}, in the following form:

\medskip

\noindent {\bf B.} It is conceivable that the vector space $\langle F^3 \cup L F^2 \rangle$ contains more relations of degree $\leq 2$, beyond the ones in $F^2$. Hence we may improve on procedure \noindent {\bf A}, by adding a search for more quadratic relations. This can be done by ordering the monomials such that the degree $3$ monomials are bigger than the degree $2$ monomials, and then split the system into degree $2$ and $3$ polynomials by calling $SplitDeg2/3()$.

Let $F^{2,(1)} = F^2$, and compute $\langle F^3 \cup L F^{2,(1)} \rangle$.
Calling $SplitDeg2/3(F^3 \cup L F^{2,(1)})$ returns new sets $F^3$ and $F^{2,(2)}$. If $\langle F^{2,(2)} \rangle$ is strictly larger than the space $\langle F^{2,(1)} \rangle$, we continue to compute $splitDeg2/3(F^3 \cup LF^{2,(2)})$, repeating the process.

We continue such computations until $\langle F^{2,(i)} \rangle = \langle F^{2,(i-1)} \rangle$ for some $i$.  Setting $F^2 := F^{2,(i)}$ we can then continue with \noindent {\bf A}, which is the elimination step. In the case that $SplitDeg2/3()$ on $F^3 \cup L F^{2,(1)}$ does not yield any new quadratic polynomials, we proceed directly to the elimination step.

\begin{Remark}{\bf The advantage of adding B:} Finding new quadratic polynomials allows to "compute with monomials of degree $\geq 4$ by computing with monomials of degree $\leq 3$". More precisely, suppose we find a new quadratic polynomial $h$, so:
\[ h = f^3 + \sum_i l_i f^2_i \]
where $f^3 \in F^3$ and the $l_i \in L$ and $f^2_i \in F^2$, and $h$ is not in $\langle F^2\rangle$.  Then $h$ can again be multiplied with a linear polynomial and added to $\langle F^3\cup LF^2\rangle$ to form

$$f^{3\prime} + lh + \sum_i l_i^\prime f^2_i  \label{eq:flh} \\
 = f^{3\prime} + lf^3 + \sum_i ll_if^2_i + \sum_i l_i^\prime f^2_i$$

The terms in $lf^3$ and $ll_if^2_i$ are generally of degree $4$, but they cancel so in reality we only work with polynomials of degree $\leq 3$. 
\end{Remark}

In Algorithm \ref{alg:elimv2x1} we present $L-Elim{\bf B}()$ which is the extended version of $L-Elim{\bf A}()$. The only difference between these two algorithms is the addition of \noindent {\bf B.}, which \emph{potentially} increases the set $F^2$.

\begin{algorithm}
    \caption{$L-Elim{\bf B}(F^3,F^2,x_1)$}
    \label{alg:elimv2x1}
    \begin{algorithmic}
    \Require{$F^3=(f^3_1,\ldots,f^3_{r_3})$ set of cubic polynomials in $B[1,n]$, $F^{2,(1)}=F^2=(f^2_1,\ldots,f^2_{r_2})$ set of quadratic polynomials in $B[1,n]$, $L=\{1, x_1 ,\ldots , x_n \}$ and $x_1$ the variable to be eliminated from $F^3$ and $F^2$}
    \Ensure{Set $F^3_{\overline{x_1}}$ of cubic polynomials and set $F^2_{\overline{x_1}}$ of quadratic polynomials, where $x_1\not\in F^2_{\overline{x_1}} \cup F^3_{\overline{x_1}}$}
        \State 
        \State $F^*\leftarrow F^3\cup L\cdot F^{2,(1)}$ \Comment{procedure {\bf A}}
        \State $F^{2,(2)}, F^3 \leftarrow SplitDeg2/3(F^*)$
        \State $i=2$
        \While{$\langle F^{2,(i)} \rangle \neq \langle F^{2,(i-1)} \rangle $} \Comment{procedure {\bf B}}
            \State $F^*\leftarrow F^3\cup L\cdot F^{2,(i)}$
            \State $F^{2,(i+1)}, F^3 \leftarrow SplitDeg2/3(F^*)$
            \State $i=i+1$
        \EndWhile
        \State $F^2=F^{2,(i)}$
        \State $F^2_{x_1}, F^2_{\overline{x_1}} \leftarrow SplitVariable (F^2,x_1)$ \Comment{procedure {\bf A}}
        \State $F^3_{x_1}, F^3_{\overline{x_1}} \leftarrow SplitVariable (F^3,x_1)$ 
       \State Return $F^3_{\overline{x_1}} F^2_{\overline{x_1}}$
    \end{algorithmic}
\end{algorithm}

\begin{Remark}{\bf Computational complexity:}
Part \noindent {\bf B} includes a while loop where we have to call $SplitDeg2/3()$ every time.  A naive implementation would require an additional $\mathcal{O}(n^9)$ operations per loop iteration. However, since $SplitDeg2/3()$ in the loop is called on the argument $F^*$ which may change only by a little for each loop iteration, it is possible that more efficient algorithms exist.
\end{Remark}

\subsection{Extensions of main elimination algorithm eliminate{\bf A}(): eliminate{\bf B}()}

In a similar manner as in the previous subsection, we can also extend $eliminate{\bf A}()$ to $eliminate{\bf B}()$.  Instead of using \textbf{B} on $\langle F^3\cup LF^2\rangle $, we use \textbf{B} on the set $F^{3,norm}_{x_1}$ which is the output of the normalization step, and the set $F^3_{\overline{x_1}}:= F^3_{\overline{x_1}}\cup F^{3,norm}_{\overline{x_1}}\cup R$ (See $eliminate{\bf A}()$ and equation (\ref{newF2F3})).

Let $F^2_{x_1}:=F^{2,(1)}_{x_1}$ and $F^2_{\overline{x_1}}:=F^{2,(1)}_{\overline{x_1}}$. Calling $SplitDeg2/3()$ on $F^{3,norm}_{x_1}$ and $F^3_{\overline{x_1}}$, returns the sets $F^{3}_{x_1}, F^{2,(2)}_{x_1}, F^3_{\overline{x_1}}$, and $F^{2,(2)}_{\overline{x_1}}$.  If either of the spaces $\langle F^{2,(2)}_{x_1} \rangle $, $\langle F^{2,(2)}_{\overline{x_1}} \rangle$ are strictly larger than the spaces $\langle F^{2,(1)}_{x_1}\rangle$ and $\langle F^{2,(1)}_{\overline{x_1}}\rangle$, we continue to compute normal forms, resultants and coefficient constraints as in $eliminate{\bf A}()$, repeating the process.

We continue these computations until $\langle F^{2,(i)}_{x_1}\rangle= \langle F^{2,(i-1)}_{x_1}\rangle$ and $\langle F^{2,(i)}_{\overline{x_1}}\rangle = \langle F^{2,(i-1)}_{\overline{x_1}}\rangle$ for some $i\geq 1$. Setting $F^2_{x_1}:=F^{2,(i)}_{x_1}$ and $F^2_{\overline{x_1}}:=F^{2,(i)}_{\overline{x_1}}$, we simply return the new sets $F^3_{\overline{x_1}}$ and $F^2_{\overline{x_1}}$.

In the case that $SplitDeg2/3()$ on $F^3_{x_1},F^3_{\overline{x_1}}$ does not yield any new quadratic polynomials, we proceed directly to $SplitVariable()$ on $F^3_{x_1}$, and return $F^3_{\overline{x_1}}$ and $F^2_{\overline{x_1}}$ as before. $eliminate{\bf B}()$ is described in Algorithm \ref{alg:mainv2}. Again the only difference between $eliminate{\bf A}()$ and $eliminate{\bf B}()$, is the partial addition of \noindent {\bf B}.

\begin{algorithm}
   \caption{$eliminate{\bf B}(F^3,F^2,x_1)$}
   \label{alg:mainv2}
    \begin{algorithmic}
    \Require{$F^3=(f^3_1,\ldots,f^3_{r_3})$ set of cubic polynomials in $B$, $F^{2,(1)}=(f^2_1,\ldots,f^2_{r_2})$ set of quadratic polynomials in $B$, $x_1$ variable to be eliminated from $F^3$ and $F^2$}
    \Ensure{Set $F^3_{\overline{x_1}}$ of cubic polynomials where $x_1\not\in F^3_{\overline{x_1}}$ and set $F^2_{\overline{x_1}}$ of quadratic polynomials where $x_1\not\in F^2_{\overline{x_1}}$}
    \State 
    \State $F^{2,(1)}_{x_1}, F^{2,(1)}_{\overline{x_1}}\leftarrow SplitVariable(F^{2,(1)},x_1)$\Comment{The $f^2_i\in F^{2,(1)}_{x_1}$ will have unique leading monomials containing $x_1$}
    \State $i=1$
    \While{$\langle F^{2,(i)}_{x_1} \rangle \neq \langle F^{2,(i-1)}_{x_1} \rangle $ or $\langle F^{2,(i)}_{\overline{x_1}} \rangle \neq \langle F^{2,(i-1)}_{\overline{x_1}} \rangle $} \Comment{procedure {\bf B}}
        \State $F^3\leftarrow (x_1+1)F^2_{x_1}\cup x_1F^2_{\overline{x_1}}\cup F^3$
        \State $F^{3,norm}_{x_1},F^{3,norm}_{\overline{x_1}}\leftarrow Normalize(F^3,F^2_{x_1})$
        \State $F^3_{\overline{x_1}}\leftarrow R\_and\_C(F^2_{x_1},x_1)$
        \State $F^{2,(i+1)}, F^3 \leftarrow SplitDeg2/3(F^{3,norm}_{x_1}\cup F^{3,norm}_{\overline{x_1}}\cup F^3_{\overline{x_1}})$
        \State $F^{2,(i+1)}_{x_1}, F^{2,(i+1)}_{\overline{x_1}}\leftarrow SplitVariable(F^{2,(i+1)},x_1)$
        \State $i=i+1$
    \EndWhile
    \State $F^3_{x_1}, F^3_{\overline{x_1}}\leftarrow SplitVariable(F^3,x_1)$
    \State $F^2_{\overline{x_1}}=F^{2,(i)}_{\overline{x_1}}$ 
    \State Return $F^3_{\overline{x_1}},F^2_{\overline{x_1}}$
    \end{algorithmic}
\end{algorithm}

The following summarizes the constructions we have done in this section.

\begin{enumerate}
    \item We can eliminate variables in two different ways: By either $L-Elim{\bf A}()$, or by $eliminate{\bf A}()$. The latter algorithm has significantly lower complexity than the first, since we avoid to multiply with \emph{all} variables in $L$.
    \item We can produce extra polynomials of degree $\leq 2$ by adding {\bf B} to $L-Elim{\bf A}()$, giving the algorithm $L-Elim{\bf B}()$.
    \item We can produce extra polynomials of degree $\leq 2$ by adding a partial version of {\bf B} to $eliminate{\bf A}()$, yielding $eliminate{\bf B}()$, with significantly lower complexity than $L-Elim{\bf B}()$.
\end{enumerate}

\subsection{Information loss} \label{subsec:enformation}
The information theoretic concepts defined in this subsection mostly follow standard notation, cf. for example \cite{Cover2006}, except for $i()$ of equation (\ref{defeq:i}) which is adapted for our purposes. Let $X$ be a discrete random variable that takes values $x_1,\ldots,x_M$ with probabilities $p_i = P(X=x_i), i=1,\ldots,M$. The  (binary) \emph{entropy} of $X$  is defined as 
\begin{equation}
    H(X) = - \sum_{i=1}^M p_i \log_2 p_i.
\end{equation}
If $X$ is uniformly distributed, \emph{i. e.} $p_i = 1/M, i=1,\ldots,M$, then $H(X) = \log_2 M$. Given $X$ and another random variable $Y$ that assumes values $y_1,\ldots,y_{M'}$,  the conditional entropy of $X$ given that we observe that $Y$ takes a specific value $y$ is 
\begin{equation}
    H(X|Y=y) = - \sum_{i=1}^M P(X=x_i|Y=y) \log_2 P(X=x_i|Y=y),
\end{equation}
and the information that we get about $X$ by observing $Y=y$ is $H(X)-H(X|Y=y)$.

The application of the concept of entropy in the context of this paper is as follows. A cryptanalyst wishes to recover a secret $k$-bit key $K$. A priori, $K$ will be assumed to be drawn uniformly from the set of all keys, so $H(K)=k$. A set $F$ of equations that $K$ must satisfy will reduce the entropy of $K$ if not all possible key values satisfy all equations in $F$. Hence $F$ contains \emph{information} about the secret key $K$ and we define
\begin{equation}
\label{defeq:i}
i(F) = \mbox{Number of key bits} - \log_2 ( \mbox{Number of key values that satisfy } F).
\end{equation}
The iterative elimination algorithms described in this section produce sequences of equation sets $F_0, F_1, F_2,\ldots$, where $F_j =  F^2_{\overline{x_1},\ldots,\overline{x_j}} \cup F^3_{\overline{x_1},\ldots,\overline{x_j}}$  denotes the set of equations contained after eliminating $j$ variables. 
From the point of view of a cryptanalyst, the function $i(F_j)$ should remain high (and close to the number of key bits) for as long as possible. On the other hand, as an easy consequence of  information theory's \emph{data processing lemma}, the sequence $i(F_0),i(F_1),i(F_2),\ldots$ is non-increasing and, since high degree polynomials are discarded from equation sets to contain the complexity, it is likely that the sequence will be decreasing at some point. From the point of view of the cipher designer,  the function $i(F_j)$ should drop rapidly with $j$. Keeping track of the development of the sequence $i(F_0),i(F_1),i(F_2),\ldots$ is of interest and it will be studied in the next section.

\section{Experimental Results}
\label{sec:experiments}

We have implemented $L-Elim{\bf B}()$ and $eliminate{\bf B}()$ and done some experiments to see how they perform in practice.  In this section we report on these experiments.

\subsection{Reduced LowMC cipher}

LowMC is a family of block ciphers proposed by Martin Albrecht et al. \cite{LowMC}.  The cipher family is designed to minimize the number of AND-gates in the critical path of an encryption, while still being secure.  The cipher itself is a normal SPN network, with each round consisting of an S-box layer, an affine transformation of the cipher block and addition with a round key.  All round keys are produced as affine transformations of the user-selected key.

Two features of the LowMC ciphers are interesting with respect to algebraic cryptanalysis.  First, the S-box used is as small as possible without having linear relations among the input and output bits.  LowMC uses a $3\times 3$ S-box, where the ANF of each output bit only contains one multiplication of input bits, making the three output polynomials of the S-box quadratic.  We can search for other quadratic relations in the six input/output variables, and we then find 14 linearly independent quadratic polynomials.  

Second, the S-boxes in one round do not cover the whole state, so a part of the cipher block is not affected by the S-box layer.  The number of S-boxes to use in each round is a parameter that varies within the cipher family, and some variants are proposed with only one S-box per round.

The cipher parameters we have used for the reduced LowMC version of our experiments are:
\begin{itemize}
    \item Block size: 24 bits
    \item Key size: 32 bits
    \item 1 S-box per round
    \item 12 or 13 rounds
\end{itemize}

As will become clear below, the number of rounds is on the border of when $L-Elim{\bf B}()$ and $eliminate{\bf B}()$ are successful in breaking the reduced cipher.

\subsubsection{Constructing equation system.}

The attack is a known plaintext attack, where we assume we are given a plaintext/ciphertext pair and the task is to find the unknown key.  We use the 14 quadratic polynomials describing the S-box as the base equations.  The bits in the unknown key are assigned as the variables $x_0,\ldots,x_{31}$, and the output bits from each S-box used in the cipher are the variables $x_{32},\ldots$.  All other operations in LowMC are linear, so the input and output bits of every S-box can be written as a linear combination of the variables defined and the constants from the plaintext.

Inserting the actual linear combination for each input/output bit of the S-box in one round will produce $14r$ equations in total.  These equations describe a LowMC encryption over $r$ rounds.  The initial number of variables is $32+3r$, but this can be reduced by using the known ciphertext.  The bits of the cipher block output from the last round are linear combinations of variables.  These linear combinations are set to be equal to the known ciphertext bits, giving 24 linear equations that can be used to eliminate 24 variables by direct substitution.  After this the final number of variables is $8+3r$.  See Fig. \ref{fig:LowMC} for the equation setup.

\begin{figure}
    \centering
    \includegraphics[width=14cm]{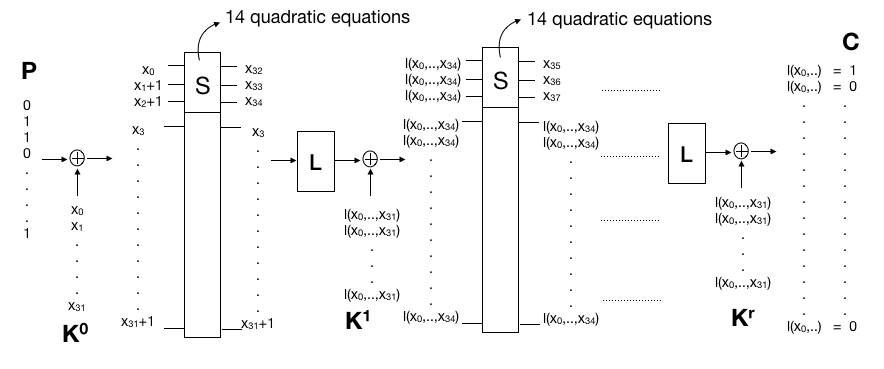}
    \caption{Setup of equation system representing reduced LowMC.  All $l(.)$'s only indicate some linear combination, and are not equal.}
    \label{fig:LowMC}
\end{figure}

\subsubsection{Experimental results.}

The goal of our experiment is to try to eliminate all the variables $x_i$ for $i\geq 32$, and find some polynomials of degree at most 3, only in variables representing the unknown user-selected key.  If we are able to find at least one polynomial only in $x_0,\ldots,x_{31}$ for one given plaintext/ciphertext pair, we can repeat for other known plaintext/ciphertext pairs and build up a set of equations that can be solved by re-linearization when the set has approximately ${32\choose 3}$ independent polynomials.

{\bf 12 rounds:} The system initially contains $44$ variables and $168$ quadratic equations.

We first use $L-Elim{\bf B}()$ to eliminate the $12$ variables with highest indices.  With this method we succeed in producing 1-2 cubic polynomial(s) only in key variables (some p/c-pairs produce 1, others produce 2 polynomials).  The memory requirement is to store the 7560 polynomials we get after multiplying the quadratic equations with all terms in $L$.  

Next we apply the $eliminate{\bf B}()$ algorithm on the same system.  Initially the set $F^2$ contains 168 polynomials and the set $F^3$ is empty.  As the algorithm proceeds, eliminating one variable at the time, the sizes of $F^3$ and $F^2$ change.  The set $F^3$ grows at first before starting to decrease before the last variables are eliminated, while the set $F^2$ decreases at a steady pace during the 12 eliminations.  The size of $F^3$ was never above 2000 polynomials, so $eliminate{\bf B}()$ has considerably less space complexity than $L-Elim{\bf B}()$.  The observed running time of the two methods were roughly the same, and $eliminate{\bf B}()$ produced the same polynomials as $L-Elim{\bf B}()$ in the end.

Finally we generate 15 different systems using different p/c-pairs, to see how many independent polynomials in $x_0,\ldots,x_{31}$ we get when collecting all outputs from the 15 systems together.  The 15 systems collectively produced 20 polynomials in only key bits, of which 16 were linearly independent.  So the hypotheses that we can produce many independent polynomials from different p/c-pairs seems to hold.

At this stage we noticed something unexpected.  After doing Gaussian elimination on the 20 polynomials to check for linear dependencies, it turned out that we produced five {\it linear} polynomials in the unknown key variables.  It therefore appears that the polynomials produced from the elimination algorithm are not completely random, and that one may need much fewer polynomials than anticipated to actually find the values of $x_0,\ldots,x_{31}$.

{\bf 13 rounds:} The initial system contains 47 variables and 182 quadratic equations.

Neither $L-Elim{\bf B}()$ nor $eliminate{\bf B}()$ were able to find any cubic polynomials in only $x_0,\ldots,x_{31}$ for any 13-round systems we tried.  So for the reduced LowMC version we used, only up to 12 rounds may be attacked using our elimination techniques and bounding the degree to at most 3.

\subsection{Toy Cipher}

For the experiments we also made a small toy cipher to do tests on.  The toy cipher has a 16-bit block and a 16-bit key, and is built as a normal SPN network.  Each round consists of an S-box layer with four $4\times 4$ S-boxes (the same S-box as used in PRINCE), followed by a linear transformation and a key addition.  The same key is used in every round.  For the elimination experiments reported here we use a 4-round version of the toy cipher.

\subsubsection{Constructing equation system.}

The equation system representing the toy cipher is constructed similarly to the reduced LowMC.  The variables in the unknown key are $x_0,\ldots,x_{15}$, and the output bits of every S-box, except for the last round, are variables $x_{16},\ldots,x_{63}$.  The inputs and outputs of every S-box can then be described as linear combinations of the variables we have defined, together with the constants in the known plaintext and ciphertext blocks.  See Fig. \ref{fig:toyCipher} for the setup of the equations.  

Each output bit of the PRINCE S-box has degree 3 when written as a polynomial of the input bits, but there exists 21 quadratic relations in input/output variables describing the S-box.  The number of quadratic equations in the 4-round toy cipher is therefore 336, in the 64 variables $x_0,\ldots,x_{63}$.

\begin{figure}
    \centering
    \includegraphics[width=14cm]{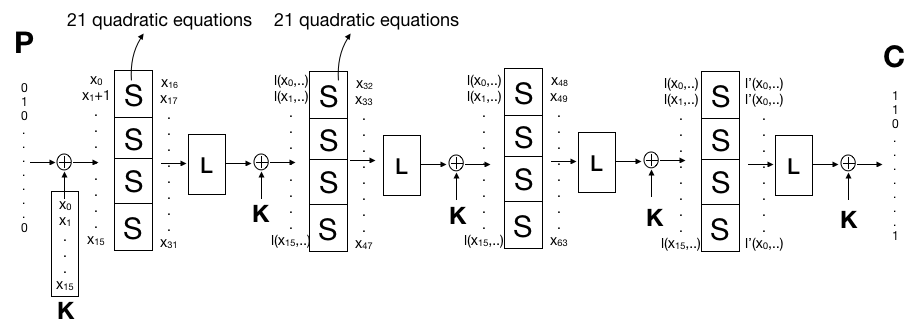}
    \caption{Setup of equation system representing 4-round toy cipher. All $l(.)$'s and $l'(.)$'s indicate some linear combination of variables.}
    \label{fig:toyCipher}
\end{figure}

\subsubsection{Experimental results.}

When trying to eliminate all non-key variables $x_{16},\ldots,x_{63}$ from the system, neither $L-Elim{\bf B}()$ nor $eliminate{\bf B}()$ were able to find any cubic polynomial in only $x_0,\ldots,x_{15}$.

We know that when running $eliminate{\bf B}()$ we will throw away polynomials giving constraints on the solution space on the way, and hence introduce false solutions.  When $F^3$ and $F^2$ become empty the whole space becomes the solution space, and we have lost all information about the possible solutions to the original equation system.  It is interesting to measure how fast the information about the solutions we seek disappear, and this is what we have investigated for the toy cipher. 

As in all algebraic cryptanalysis we are interested in finding the possible values for the secret key.  In this case this means finding the values of $x_0,\ldots,x_{15}$.  With only a 16-bit key it is possible to do exhaustive search, and check which key values that fit in any of the equation systems we get after eliminating some variables.  The procedure we used for checking if one guessed key fits in a given system is as follows:

\begin{itemize}
    \item Fix $x_0,\ldots,x_{15}$ to the guessed value in the system
    \item Do Gauss elimination on the resulting system to produce linear equations
    \item Use each linear equation found to eliminate one more variable
    \item Repeat Gauss elimination to find new linear equations and new eliminations, etc.
    \item If we find the polynomial $1$ after Gauss elimination the guessed key does not fit
    \item If all variables get eliminated without producing any $1$-polynomial, the guessed key fits
    \item If we fail to produce linear equations in the Gauss elimination, it is undecided whether the guessed key fits or not
\end{itemize}

We set up an elimination order where variables to be eliminated were distributed evenly throughout the system.  That is, we do not eliminate the second variable from an S-box before all S-boxes have at least one variable eliminated.  The exact elimination order used was 
$$x_{36},x_{24},x_{52},x_{44},x_{20},x_{56},x_{40},x_{28},x_{60},x_{32},x_{16},x_{48},x_{18},x_{50},x_{34},x_{26},$$
$$x_{58},x_{46},x_{54},x_{22},x_{62},x_{30},x_{38},x_{42},x_{47},x_{21},x_{49},x_{35},x_{29},x_{59},x_{41}.$$

After eliminating these 31 variables, all keys fit in the system we have at that point.  For each system we get along the way, we checked how many keys that fit in the given system.  This gives a measure of how much information the system has about the unknown secret key we try to find.  For a system $F$, we use $i(F)$ (\ref{defeq:i}) that says how much information the system has about the key:
$$i(F)=16-log_2(\mbox{\# of keys that fit in $F$}).$$

Denote the system we have after eliminating $v$ variables as $F_v$.  For the plaintext/ciphertext pair we used there were three keys that fit in the initial system, so we have $i(F_0)\approx 14.42$.  We know that $i(F)$ is a strictly non-increasing function for increasing $v$, because we can only lose information during elimination.  Put another way, if the key $K$ fits in $F_v$, $K$ will also fit in $F_w$ for $w>v$.  It is interesting to see what the rate of information loss is during elimination.  Is the information loss gradual, or do we lose all information more suddenly?  In Fig. \ref{fig:plot} we have plotted the graph for $i(F_v)$ for $0\leq v\leq 31$.

\begin{figure}
    \centering
    \includegraphics[width=14cm]{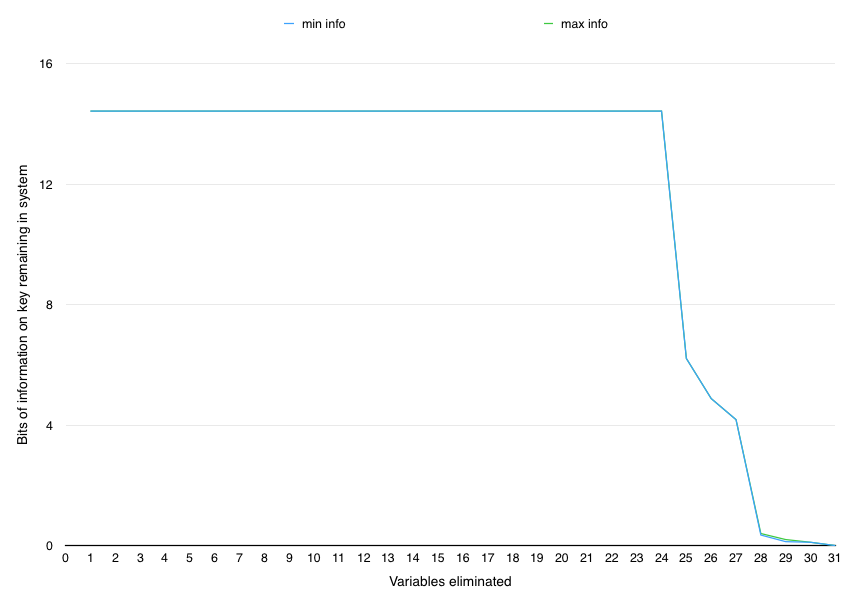}
    \caption{Information loss when eliminating variables from 4-round toy cipher.}
    \label{fig:plot}
\end{figure}

As we can see in Fig. \ref{fig:plot}, we can eliminate up to 24 of the 48 non-key variables in the system without losing any information on the possible keys.  The three keys that fit in the original system are still the only ones that fit in $F_{24}$.  After that, all information on possible keys is lost rather quickly, and $i(F_{31})=0$.  Only in $F_{27}$ and $F_{28}$ did we run into some cases where it could not be decided whether a guessed key fits or not.  This is barely visible in Fig. \ref{fig:plot}, where there is a tiny area where the true values of $i(F_{27})$ and $i(F_{28})$ may lie.

We find this behavior interesting and a source for further study.  We can look at it this way: It is possible to describe a cipher by quadratic equations in $k$ key variables and $n-k$ non-key variables (i.e. constructed as in Figs. \ref{fig:LowMC} and \ref{fig:toyCipher}).  Our experiment indicates that (at least sometimes) one can create a cubic equation system, with the same information on the key, with only $k+(n-k)/2$ variables.  In other words, there is a trade-off between degree and number of variables needed to describe a cipher.  For the toy cipher, increasing the degree by one allows to cut the number of non-key variables in half to describe the same cipher.  

\section{Conclusions}

In this paper we proposed two new algorithms for performing elimination of variables from systems of Boolean equations: 
$L-Elim{\bf *}()$ which is essentially Gaussian elimination, and $eliminate{\bf *}()$ which is more efficient and when suitably extended also more effective. We applied these algorithms in a known plaintext attack to two reduced versions of the LowMC cipher: $12$ and $13$ rounds with $24$ bits block and $32$ bits key. For the $12$-round version the algorithms produces polynomials of degree $3$ in only key variables, while in the $13$-round example the algorithms fail to find any polynomials of degree $3$ in only key variables. 

We also applied the algorithms to a toy cipher for performing tests, where the proposed algorithms fails to find any polynomials of degree $3$ in only key variables. Instead we extend the experiments by measuring how much information we lose about the key during elimination. Surprisingly, the experiments show that we can eliminate many auxiliary variables from the system of equations, without losing any information about the key. Another result of the experiments is that we lose information about the key rather quickly after a certain point in the elimination process. We conclude that there is a lot of future work to be done in this direction.

\section*{Appendix A: Monomial orders and splitting algorithms}
\label{app:A}

Consider the vector space $\langle F\rangle = \langle F^3\cup LF^2\rangle$ which is generated by the set of Boolean polynomials $F^3,F^2$.  We can perform Gaussian reduction on this vector space with two different orders.  In the Gaussian elimination we order the monomials such that the largest monomials are eliminated first.

\medskip

\noindent {\bf A.} The monomial order where $x_1$-monomials are largest: $\langle F\rangle$ can be realised as a matrix $A$.  Each row of $A$ corresponds to one polynomial in $F^3\cup LF^2$, and each column corresponds to one monomial $m$. Moreover, the entry $A[i,j]$ corresponds to the coefficient of the $j$'th monomial in the $i$'th polynomial.  When $x_1$-monomials are the largest, we consider the leftmost columns of $A$ to correspond to all monomials containing $x_1$. Note that for the matrix $A$, we write $A[i\ldots j]$ to indicate the submatrix consisting of rows $i$ through $j$ of $A$.  With a slight abuse of notation, we write $x_1\in m, x_1\in f$ or $x_1\in G$ to indicate that $x_1$ occurs in monomial $m$, polynomial $f$ or polynomial set $G$.

When performing Gaussian elimination on $A$ with this order, we can create polynomials in the span of $F^3\cup LF^2$ that have $0$'s in the leftmost columns. If there are enough polynomials in $F^3\cup LF^2$, the lower rows of $A$ will then give a non-empty set of polynomials $F^3_{\overline{x_1}} \cup LF^2_{\overline{x_1}}$ that do not contain the $x_1$-variable. Note that the new set $F^3_{\overline{x_1}}\cup LF^2_{\overline{x_1}}\supseteq F^3\cup LF^2\cap\Bton$.

\medskip

\noindent {\bf B.} The order where higher-degree monomials are larger: It is conceivable that $\langle F\rangle=\langle F^3 \cup LF^2 \rangle$ contains more quadratic polynomials than just the ones in $F^2$. These can be found if we order the monomials such that the degree $3$ monomials are bigger than degree $2$ monomials.  We can then use Gaussian elimination on the matrix $A$ representing $\langle F\rangle=\langle F^3\cup LF^2\rangle$ to eliminate monomials of degree $3$ and possibly produce more quadratic equations than there are originally in $F^2$.

The algorithm for splitting polynomial sets into those containing $x_1$ and those which do not contain $x_1$ is given in Algorithm \ref{alg:splitx1} below. The algorithm for splitting a set of degree $3$ polynomials into degree $2$ and $3$ polynomials is given in Algorithm \ref{alg:split23} below.
We are going to use these orders in section \ref{Algorithms} as building blocks for finding more quadratic and cubic polynomials when developing the elimination algorithms.

\begin{algorithm}
    \caption{$SplitVariable(F,x_1)$}
    \label{alg:splitx1}
    \begin{algorithmic}
      \Require{$F=(f_1,\ldots,f_m)$ set of polynomials of degree $\leq 3$ in $B[1,n]$}
      \Ensure{Sets $F_{x_1}$ and $F_{\overline{x_1}}$ of polynomials such that $\langle F\rangle=\langle F_{x_1}\cup F_{\overline{x_1}}\rangle, x_1\in F_{x_1}$ and $x_1\not\in F_{\overline{x_1}}$}
      \State 
      \State $\mathbf{m}=(m_1,\ldots,m_c,m_{c+1},\ldots,m_t)\leftarrow$ monomials occurring in $F$ where $x_1\in m_i$ for $1\leq i\leq c$ and $x_1\not\in m_i$ for $i>c$.
      \State $A\leftarrow m\times t$ matrix where coefficient of $m_j$ in $f_i$ is entry $A[i,j]$
      \State Row-reduce $A$ such that leading 1's in rows $i\leq r$ are in columns $j\leq c$ and leading 1's in rows $i>r$ are in columns $j>c$.
      \State $F_{x_1}=A[1\ldots r]\mathbf{m}^T$
      \State $F_{\overline{x_1}}=A[r+1\ldots t]\mathbf{m}^T$
      \State Return $F_{x_1}, F_{\overline{x_1}}$
    \end{algorithmic}
\end{algorithm}

\begin{algorithm}
   \caption{$SplitDeg2/3(F)$}
   \label{alg:split23}
    \begin{algorithmic}
      \Require{$F=(f_1,\ldots ,f_m)\subseteq B[1,n]$ is set of polynomials of degree $\leq 3$.}
      \Ensure{Sets $F^2$ of quadratic polynomials and $F^3$ of cubic polynomials such that $\langle F\rangle=\langle F^2\cup F^3\rangle$.}
      \State 
      \State $\mathbf{m}=(m_1,\ldots,m_c,m_{c+1},\ldots,m_t)\leftarrow$ monomials occurring in $F$ where $deg(m_i)=3$ for $1\leq i\leq c$ and $deg(m_i)\leq 2$ for $i>c$.
      \State $A\leftarrow m\times t$ matrix where coefficient of $m_j$ in $f_i$ is entry $A[i,j]$
      \State Row-reduce $A$ such that leading 1's in rows $i\leq r$ are in columns $j\leq c$ and leading 1's in rows $i>r$ are in columns $j>c$.
      \State $F^2=A[1\ldots r]\mathbf{m}^T$
      \State $F^3=A[r+1\ldots t]\mathbf{m}^T$
      \State Return $F^2, F^3$
    \end{algorithmic}
\end{algorithm}

\section*{Appendix B: Normalizing Cubics with Respect to Quadratics}

In this appendix we present the concept of  \emph{normalization}. This procedure eliminates particular monomials containing the targeted variable $x_1$ from a set of cubic polynomials using a set of quadratic polynomials as a basis. 
This is a heuristic procedure that attempts to remove monomials containing a variable $x_1$ from a set of polynomials. Experiments indicate that this normalization usually has a beneficial effect on both efficiency and information preservation. Moreover, the procedure is a technical requirement for the proof of Theorem~\ref{pro:FLG}. Before giving the algorithm, we develop a mathematical foundation around the process of normalization.

\medskip

Since we in this paper are considering the sets $F_{x_1}^2$ and $F_{x_1}^3$, we normalize the polynomials in $F_{x_1}^3$ with respect to the set $F_{x_1}^2$ and the variable $x_1$. With the orders on the monomials introduced in Section \ref{Notation}, it follows that any non-zero Boolean polynomial $f^3\in B[1,n]$ of degree $3$ has a {\it leading term}. This is the largest monomial in $f$ with respect to the given order. For a given set $F^2 = \{ f^2_1, \ldots, f^2_{r_2}\}$ of quadratic polynomials with distinct leading terms, the polynomial $f^3$ is in {\it normal form} with respect to the set $F_{x_1}^2$, if no monomial in $f^3$ is divisible by the leading term of any polynomial in $F_{x_1}^2$.  A polynomial $f^3$ can be brought into a normal form $f^{3,norm}$ (not in general unique) by successively subtracting multiples of the polynomials in $F_{x_1}^2$.
More specifically, we obtain $f^{3,norm}$ by the following procedure. 
Let
\[ 
f^3 = m_{f^3} + \text{ lower order terms}, \quad f^2_i = m_{f^2_i} + \text{ lower order terms},
\]
and assume that $m_{f^2_i}$ divides $m_{f^3}$. Then we can write $m_{f^3} = qm_{f^2_i}$ where $q$ is a monomial whose set of variables is disjoint from that of $m_{f^2_i}$.  We can now replace $f^3$ by $f^3+q{f^2_i}$, cancelling the term $m_{f^3}$ in the process.  Doing this successively will eventually produce the normal form of $f^3$ with respect to $f^2_i$, and performing this for all generators will eventually produce the normal form of $f^3$ with respect to the set $F_{x_1}^2$.

Note that there is a specific case that merits attention, namely when there is a polynomial $f^2_i$ in $F_{x_1}^2$ with leading term $x_1$. Then this term is the only term in $f^2_i$ involving the variable $x_1$. To distinguish this polynomial, we denote it by 

$$f^2_\infty = x_1 + h, \, \, h\in B[2,n].$$

Extra care is needed when $F_{x_1}^2$ contains $f^2_\infty$ with leading term $x_1$. The reason is that when following the procedure for making normal forms, we would remove every term in the polynomials of $F_{x_1}^3$ containing the variable $x_1$. This will also imply that we replace $f^3\in F_{x_1}^3$ by $f^3+qf^2_\infty$, where $q$ is a quadratic monomial.  Then the new $f^{3,norm}$ in general will involve terms of degree $4$, which we do not allow. However, we may still freely use $f^2_\infty$ to remove all {\it quadratic terms} in $f^3$ containing $x_1$. Hence, when $f^2_\infty$ is found in $F_{x_1}^2$ there will be no quadratic monomials in $F_{x_1}^3$ containing $x_1$ after normalization. A normal form of $f^3$ using this procedure we call a {\it $3$-normal form}, to signify that we do not do computations with monomials of degree $\geq 4$.

The complete algorithm for producing normal forms for a set $F_{x_1}^3$ of cubic polynomials using a set $F_{x_1}^2$ of quadratic polynomials as a basis, including possibly $f^2_\infty\in G_2$, is given in Algorithm \ref{alg:norm}.

\begin{algorithm}
   \caption{$Normalize(F_{x_1}^3,F_{x_1}^2)$}
   \label{alg:norm}
    \begin{algorithmic}
      \Require{$F_{x_1}^3=(f^3_1,\ldots,f^3_{r_3})$ set of cubic polynomials in $B[1,n]$, $F_{x_1}^2=(f^2_1,\ldots,f^2_{r_2})$ set of quadratic polynomials in $B[1,n]$ with $m_{f^2}$ unique leading term (in some order) in $f^2_i$}
      \Ensure{Set $F_{\overline{x_1}}^{3,norm}$ where no monomial $m\in F_{\overline{x_1}}^{3,norm}$ contains $x_1$} and set $F_{x_1}^{3,norm}$ where each polynomial contains at least one monomial  with $x_1$
      \State 
      \State $F_{\overline{x_1}}^{3,norm},F_{x_1}^{3,norm}\leftarrow \emptyset$
      \For{$f^2\in F_{x_1}^2$}
        \State $m_{f^2}\leftarrow$ leading monomial in $f^2$
            \If{$m_{f^2}=x_1$}
                \State $d \leftarrow 2$
            \Else 
                \State $d \leftarrow 3$
            \EndIf
                \For{$f^3\in F_{x_1}^3$}
                    \For{all monomials $m\in f^3, \deg(m) \leq d$ }
                        \If{$m_{f^2}$ divides $m$}
                            \State $f^3\leftarrow f^3+\frac{m}{m_{f^2}}f^2$ \Comment{eliminate monomial divisible by $m_{f^2}$}
                        \EndIf
                    \EndFor
                \EndFor    
        \EndFor
        \For{$f^3\in F_{x_1}^3$}
            \If{$x_1\notin f^3$}
                \State $F_{\overline{x_1}}^{3,norm}\leftarrow F_{\overline{x_1}}^{3,norm}\cup f^3$
            \Else
                \State $F_{x_1}^{3,norm}\leftarrow F_{x_1}^{3,norm}\cup f^3$
            \EndIf
        \EndFor
    \State Return $F_{\overline{x_1}}^{3,norm},F_{x_1}^{3,norm}$ 
    \end{algorithmic}
\end{algorithm}


\end{document}